\pdfoutput=1 
\documentclass[11pt]{article}
\usepackage{amsmath,amssymb,amsthm,mathtools}
\usepackage[margin=1in]{geometry}
\usepackage{graphicx}
\usepackage{xurl}
\usepackage[hidelinks]{hyperref}
\hypersetup{pdftitle={Delayed Verification in Multi-Agent LLM Systems: Stability of a Consensus Model and Coherence-Based Corrector Placement},pdfauthor={Igor Itkin}}
\usepackage{microtype}
\usepackage[ruled,vlined,linesnumbered]{algorithm2e}
\newtheorem{theorem}{Theorem}
\newtheorem{proposition}{Proposition}
\newtheorem{lemma}{Lemma}
\newtheorem{corollary}{Corollary}
\theoremstyle{definition}
\newtheorem{assumption}{Assumption}
\newtheorem{remark}{Remark}

\newcommand{\R}{\mathbb{R}}
\newcommand{\Lg}{L_{\mathrm g}}
\newcommand{\spec}{\operatorname{spec}}
\DeclareMathOperator{\diag}{diag}
\DeclareMathOperator{\tr}{tr}

\title{\bf Delayed Verification in Multi-Agent LLM Systems:\\
Stability of a Consensus Model and Coherence-Based Corrector Placement}
\author{Igor Itkin\\
\small Independent Researcher, Tel Aviv, Israel\\
\small \texttt{ig.itkin@gmail.com}\\
\small ORCID: \href{https://orcid.org/0009-0004-9513-8463}{0009-0004-9513-8463}}
\date{June 2026; revised September 16, 2026}

\begin{document}
\maketitle

\begin{abstract}
Multi-agent large language model (LLM) systems often rely on verifier and critic agents to suppress hallucinations, but
verification is delayed. During this delay, false claims can propagate through the agent network. We
model this process as delayed consensus on a graph with grounded corrector nodes. Spectral decomposition
by the grounded Laplacian yields an exact stability threshold for the verification dose: correction
that is too strong or too delayed can destabilize this linear model through an oscillatory mode.
This does not establish a bounded nonlinear cycle. When communication and
verification have the same delay, their gains add; at delay two the threshold for the dimensionless
combined dose $\eta(\mu+\kappa)$ is the inverse golden ratio. A related coherence surrogate gives a supermodular placement objective: greedy
assignment achieves at least $(1-1/e)$ of the optimal coherence reduction for a fixed corrector budget.
This guarantee concerns the surrogate, not the mean-square error under constant random forcing. Experiments
with an externally computed correction law show larger signed-error variability at longer effective
lag in five of six tested models in an exploratory all-run analysis. A separate expanded factual
study uses $400$ new questions, corrected delay indexing and complete response logs. After semantic
review, conservative completion bounds still permit both signs of the change in binary error
amplitude, largely because many states contain abstentions. A post-hoc analysis finds lower
abstention frequency at the longer lag on TruthfulQA, without establishing a change in error
variability. Neither study establishes general factual stability; non-negativity alone does not
rule out oscillations under persistent forcing.
\end{abstract}

\medskip
\noindent\textbf{Keywords:} multi-agent LLM systems; hallucination cascade; verification latency;
delay stability; grounded Laplacian; leader selection; corrector
placement.

\section{Introduction}

Large language models hallucinate, and in multi-agent systems hallucination stops being a static
property of one output and becomes a dynamic process: claims are exchanged, revised, and
reused as context, so an unsupported claim from one agent can be amplified by others
\cite{HallCascade,SparkFire,Collective}. The phenomenon is sharp even inside a single model:
Zhang et al.\ show that once an LLM commits to a wrong answer it generates further false claims to
justify it (``hallucination snowballing'') even when it can separately recognize them as wrong
\cite{Snowball}. Approaches to improving responses include self-feedback
\cite{SelfRefine}, reinforced self-checking \cite{MARCH}, and multi-agent debate
\cite{Du,ReConcile}. These mechanisms differ in their access to evidence: for example,
Self-Refine uses the same LLM to generate, critique and refine an answer, without an oracle
truth signal. Evidence-grounded verification is a distinct design choice.

Verification, however, has latency. A verifier reads a claim, retrieves evidence, and returns a
correction only after several interaction steps; meanwhile the unverified claim has already
propagated. Delayed negative feedback is the classic ingredient of oscillation and instability in
control systems. We study the following question in a specified consensus model: \emph{can the
very act of verification, if delayed, destabilize the factuality it is meant to protect, and how
should correctors be dosed and placed to avoid this?}

The agent-safety literature is moving from post-hoc analysis toward online monitoring of
cascades (causal cross-channel monitoring \cite{CASPIAN}, online failure auditing
\cite{AgentForesight}, temporal-graph anomaly detection \cite{GUARDIAN}). These monitoring objectives
are distinct from the stability of a controller that acts on delayed feedback. Quickest-change-detection
theory studies detection delay under false-alarm constraints \cite{Page,Lorden,Pollak,VeeraBanerjee};
here we analyze correction gain and delay in a postulated network recurrence. The single-agent
detection problem is treated separately in our companion work \cite{ItkinQCD}.

Delayed institutional regulation of adaptive multi-agent systems provides related motivation
\cite{Itkin2026}; that work reports a supercritical Hopf bifurcation in a different nonlinear
model. That nonlinear conclusion does not transfer to our discrete linear recurrence.
Existing dynamical accounts of LLM interaction include
consensus models \cite{OpinionConsensus,ChenConsensus}, hidden-anchor deliberation
\cite{HiddenAnchors}, self-correction as feedback control \cite{SelfCorrFeedback}, and models of
factual-error propagation \cite{HallCascade,Collective,SparkFire}. Our specific object is a discrete
network recurrence with an explicitly delayed restoring term. Its stability threshold is a property
of that recurrence; relating it to ordinary factual verification requires independent identification.
Empirical debate failures \cite{DebateCollapse} motivate this question but do not establish that
this mechanism explains those failures.

\noindent This paper makes five contributions.
\begin{itemize}
  \item \textbf{A model and a reduction.} We cast the corrected multi-agent loop as a delayed consensus
        with grounded correctors that the grounded Laplacian decouples into independent scalar delay
        recurrences (Lemma~\ref{lem:decouple}; Sections~\ref{sec:model}--\ref{sec:stab}).
  \item \textbf{An exact verification-dose limit.} On the branch $0<a<1$, increasing the
        dimensionless dose $\beta=\eta\kappa$ first destabilizes the linear loop through a complex
        pair. The threshold decreases with delay and approaches the inverse golden ratio at
        delay two as $a\uparrow1$; it is not a theorem of bounded nonlinear oscillation (Theorem~\ref{thm:main},
        Proposition~\ref{prop:boundary}, Lemma~\ref{lem:cheb}, Corollary~\ref{cor:dose-explicit};
        Section~\ref{sec:stab}).
  \item \textbf{Placement for a coherence surrogate.} The resolvent trace is supermodular, so greedy
        placement achieves at least $1-1/e$ of the optimal reduction in this metric. The guarantee
        does not transfer directly to mean-square error under constant forcing (Theorem~\ref{thm:place};
        Section~\ref{sec:place}).
  \item \textbf{Two coupled delays.} We derive the unit-circle locus of the two-delay characteristic
        polynomial, including its singular cases. Equal delays give an exact combined-gain ceiling,
        equal to the inverse golden ratio at delay two (Theorem~\ref{thm:twodelay}, Corollary~\ref{cor:equal};
        Section~\ref{sec:two}).
  \item \textbf{An empirical regime comparison.} Numeric-estimation debates driven by an external
        controller exhibit greater signed-error variability at longer lag in an exploratory
        all-run comparison across several models. The archived experiment requires an indexing
        correction and separate conditional/all-run analyses. A new $400$-question factual study
        records all responses and uses corrected indexing, but abstentions and residual scoring
        ambiguity leave the direction of its binary-amplitude contrast unidentified. A separate
        analysis distinguishes the frequency of each response outcome from its temporal variability;
        non-negativity alone does not establish stability
        (Remark~\ref{rem:absorb}; Section~\ref{sec:emp}).
\end{itemize}

Project repository: \url{https://github.com/YehudaItkin/delayed-verification-llm}.
The project contains numerical checks, archived trajectories and reanalysis scripts. Historical
archives lack raw responses for some experiments, preventing independent reconstruction of their
factual scores. The expanded study instead preserves full request/response logs, a frozen manifest,
versioned semantic decisions and checksummed analyses (Section~\ref{sec:expanded}).
The data layout and offline reproduction commands are given in Appendix~\ref{app:repro}.

The theoretical argument has three steps: define the recurrence (Section~\ref{sec:model}),
derive its scalar stability threshold (Section~\ref{sec:stab}), and distinguish the placement
objective from stability (Section~\ref{sec:place}). Section~\ref{sec:two} adds communication
delay. Section~\ref{sec:emp} then asks what synthetic and LLM experiments can establish about
these models; its observed variability measures are not tests of asymptotic stability.

\section{Related work}\label{sec:related}

We relate empirical reliability studies of multi-agent LLMs to classical delayed-feedback control.
The stability and placement ingredients have precedents in control theory; the contribution is their
specialization to a stated verification-loop model, with explicit limits on its empirical interpretation.

On the LLM side, Jamshidi et al.\ track claim-level inconsistency across agent chains and report
that deeper chains can reduce an explicit hallucination score while eroding factual accuracy
\cite{HallCascade}; their related work models collective hallucination and interventions
\cite{Collective}. Xie et al.\ model error cascades and study provenance-aware mitigation
\cite{SparkFire}. Evaluator preferences and unsafe content are studied as distinct propagation
phenomena \cite{Contagion,PropGuard}. These are already dynamical accounts, rather than a purely
static literature. Debate can improve factuality \cite{Du,ReConcile}; Liang et al.\ propose it as a
remedy for degeneration during \emph{self-reflection}, while also discussing judge bias \cite{Liang}.
Other work separates the contribution of voting from interaction \cite{DebateVote}, studies
sycophantic consensus \cite{TalkCheap}, and varies communication topology \cite{SparseTopo}.

Related dynamical models include DeGroot-inspired consensus \cite{OpinionConsensus},
hidden-anchor deliberation \cite{HiddenAnchors}, and single-agent error-transition models
\cite{SelfCorrFeedback}. Delay-free averaging need not make every agent's opinion monotone:
a cooperative two-node averaging matrix with diagonal $0.1$ and off-diagonal $0.9$ has disagreement
multiplier $-0.8$, giving damped alternation. Delay robustness also depends on the update rule.
Yao and Li \cite{YaoLi} retain a current self-state and delay normalized neighbor terms;
our two-delay model delays the entire Laplacian term, including its diagonal. Their convergence
result therefore cannot establish delay robustness of our gossip term. Indeed, with no verification,
$p=20/27$ and communication delay $d=2$, the scalar polynomial is
\[
 z^3-z^2+20/27=(z+2/3)(z^2-5z/3+10/9).
\]
Its conjugate roots have squared modulus $10/9>1$, whereas instantaneous gossip has multiplier
$7/27$. Thus delayed gossip alone can destabilize our model. Sections~\ref{sec:stab} and
\ref{sec:two} distinguish instantaneous gossip from coupled communication and verification delays.

Detection and control address different questions. Failure attribution benchmarks
\cite{AgentHallu,WhoWhen,TRAIL} and online monitoring methods \cite{AgentForesight,CASPIAN,GUARDIAN}
locate or anticipate failures; we analyze a specified correcting action. Sequential detection
provides related tools, including CUSUM and minimax delay criteria
\cite{Page,Lorden,Pollak,Moustakides,Lai,VeeraBanerjee,XieQCD,TNB}, but no detection-delay
result is proved here. Potential LLM monitoring signals include semantic uncertainty and
self-consistency \cite{SemEntropy,SelfCheck,HallSurvey}; confidence-based early exit
\cite{EntropyCoT} and social-learning interventions \cite{JainKrishnamurthy} have different objectives.
Our stability boundary specializes the discrete delay equation studied by Kuruklis \cite{Kuruklis}.

The placement
results adapt established network-control objectives. Clark et al.\ study supermodular
\emph{transient convergence error} \cite{ClarkLeader}; steady-state variance under noise-corrupted
leaders is treated directly by Mackin and Patterson \cite{MackinPatterson}. These metrics should
not be conflated with constant-forcing squared error. Our resolvent-trace proof and greedy reduction
guarantee specialize this line of work \cite{ClarkSubmod,PiraniSundaram,Nemhauser}.
Sherlock \cite{Sherlock} instead studies budgeted verification in agentic workflow execution;
that setting differs from stability and coherence in our symmetric consensus recurrence.
The delay axis has a control precedent as
well: Pirani et al.\ characterize the maximum-allowable-delay margin of leader--follower consensus
through the grounded-Laplacian spectrum and turn it into a leader-placement centrality
\cite{PiraniRobust}, but in continuous time and with a single delay; we add the discrete verification
dose, the two coupled delays, and the LLM loop. The grounding signals themselves (claim
verification and faithfulness \cite{FEVER,RAG}) are standard.

\section{Model}\label{sec:model}

We abstract a round of multi-agent deliberation as follows. Each agent carries a scalar \emph{belief}
about a claim, represented by a real coordinate rather than a calibrated probability, and advances in
rounds: agents revise their belief toward those of their neighbours (debate), while designated
\emph{corrector} agents, which have access to verified evidence, push beliefs back toward the ground
truth. Two features of deployed systems drive the dynamics and are the focus of this paper. First,
correction is \emph{delayed}: a verifier reads a claim, retrieves evidence, and replies only after a
latency, so it acts on a stale belief. Second, persistent faults can pull beliefs away from truth;
we represent that influence by a constant additive forcing. The problem we study is then stated as:
given the interaction graph, the verification delay, and the faulty forcing, (a) when does the loop
converge to its generally biased equilibrium, and when does it lose stability, and (b) where should a limited
budget of correctors be placed to reduce a specified coherence surrogate? The rest of this section makes the model
precise; Sections~\ref{sec:stab}--\ref{sec:two} answer~(a) and Section~\ref{sec:place} answers~(b).

Let $\mathcal G=(V,E)$ be a finite undirected weighted graph on $V=\{1,\dots,n\}$.
Its adjacency matrix has $A_{ij}=A_{ji}\ge0$, $A_{ii}=0$, and an edge exactly when $A_{ij}>0$;
$D=\diag(A\mathbf1)$ and $L=D-A\succeq0$ are its degree matrix and Laplacian.
Here $\mathbf1$ denotes the all-ones vector, $I$ the identity of the relevant dimension,
and $B\succeq0$ ($B\succ0$) means that the symmetric matrix $B$ is positive semidefinite
(positive definite). We use $\tr$ for matrix trace and $\|\cdot\|_2$ for the Euclidean
vector norm or its induced matrix norm. Agent $i$
holds a scalar belief $b_{i,t}\in\R$ about a claim with ground truth $b^\star$; the error is
$e_{i,t}=b_{i,t}-b^\star$. A \emph{corrector} set $R\subseteq V$ ($|R|=m$) consists of agents with
access to verified evidence, held at truth ($e_{i,t}=0$, $i\in R$), an ideal hard-pinning abstraction related to pinning control
\cite{WangChen}. The remaining \emph{free} agents
$V_{\mathrm f}=V\setminus R$ ($n_{\mathrm f}=n-m\ge1$) carry error $e_t\in\R^{n_{\mathrm f}}$; a faulty
subset injects a constant bias collected as a forcing $g\in\R^{n_{\mathrm f}}$.
Symmetry means reciprocal weights, a modeling assumption rather than a property inferred from
the exchange of peer text in our experiments. It enables the real orthogonal decoupling of
Lemma~\ref{lem:decouple}. Directed interaction matrices need not be symmetric and may be non-normal;
the real-spectrum thresholds below do not cover them in general.
The hard pins remain at truth instantaneously. Separately, every free node receives the uniform
delayed restoring term specified below. Its gain is not derived from the number or incident edges
of the hard pins. This restoring signal is oracle-valued in the mathematical model; the experimental
LLM verifier need not be correct. Content-level phenomena such as truth dilution and claim
provenance are outside the scalar abstraction.

Free agents run consensus (step $\eta>0$) plus a uniform \emph{delayed verification} of gain
$\kappa>0$ and integer latency $\delta\ge1$:
\begin{equation}\label{eq:dyn}
  e_{t+1} \;=\; (I-\eta \Lg)\,e_t \;-\; \eta\kappa\, e_{t-\delta} \;+\; \eta g,
\end{equation}
where $\Lg \coloneqq L[V_{\mathrm f},V_{\mathrm f}]$ is the principal submatrix of $L$ on the free
nodes (the \emph{grounded Laplacian}). Reading \eqref{eq:dyn} term by term: $(I-\eta\Lg)\,e_t$ is the
consensus (debate) step, in which each free agent moves toward its neighbours at rate $\eta$;
$-\eta\kappa\,e_{t-\delta}$ is the separate uniform restoring force toward truth, with \emph{strength}
$\kappa$ and \emph{delay} $\delta$ (it acts on the $\delta$-step-old error); and $\eta g$ is the
constant pull of the faulty agents. The verification strength $\kappa$ and delay $\delta$ are the two
control knobs, and everything that follows analyses \eqref{eq:dyn} as a function of them.

The initial history $e_{-\delta},\ldots,e_0$ is arbitrary and finite, and updates start at $t=0$.
We call the homogeneous loop \emph{asymptotically stable} if every such history converges to zero;
for this finite-dimensional linear recurrence this is equivalent to every characteristic root
lying strictly inside $|z|<1$. With constant $g$, stability instead means convergence to the
equilibrium after subtracting its constant history. A nonzero equilibrium error is therefore
compatible with stability. At update $t\to t+1$, delay $\delta$ means feedback from state
$t-\delta$; the empirical immediate-feedback condition uses the extension $\delta=0$.

The additive fault abstraction is not identical to eliminating fixed erroneous peers from a
consensus graph. If external peers have fixed error vector $f$ and weights $B\ge0$ into the free
nodes, their contribution is $\eta Bf-\eta\diag(B\mathbf1)e_t$. Thus exact elimination replaces
$\Lg$ by $\Lg+\diag(B\mathbf1)$ as well as setting $g=Bf$. Omitting this diagonal term changes
the stability problem. Equation~\eqref{eq:dyn} instead specifies $\Lg$ and additive $g$ directly;
we do not infer either operator from the fixed textual peers in the factual experiments.

To make the grounded Laplacian positive definite, we impose the following anchoring condition.
\begin{assumption}[grounding reachability]\label{ass:reach}
Every connected component of the subgraph induced by $V_{\mathrm f}$ contains a node adjacent to $R$.
\end{assumption}
\noindent This condition is equivalent to $\Lg\succ0$. To see this, extend a free-node vector
$v$ by zeros on $R$ to obtain $\tilde v$. Then
\[
 v^\top\Lg v=\sum_{\{i,j\}\in E} A_{ij}(\tilde v_i-\tilde v_j)^2.
\]
Under the condition, zero energy forces $v=0$; without it, a nonzero constant on an unanchored
free component has zero energy. We assume the condition for the hard-grounded stability results;
Section~\ref{sec:place} states separately when an ungrounded baseline is allowed.
It is not necessary for stability of every corrected loop: the uniform verification term
in \eqref{eq:dyn} can also stabilize a zero Laplacian mode. For example, at $\delta=1$ and
$\eta\kappa=1/2$, that mode has characteristic polynomial $z^2-z+1/2$, with roots
$(1\pm i)/2$ of modulus $1/\sqrt2<1$. Moreover, for $\kappa>0$, the operator $\Lg+\kappa I$
is invertible even if $\Lg$ is singular. The unique equilibrium is
$e_\infty=(\Lg+\kappa I)^{-1}g$, which is truth when $g=0$; convergence to it requires the stability
conditions below.

\section{Stability and the verification dose}\label{sec:stab}

\begin{lemma}[grounded-Laplacian decoupling]\label{lem:decouple}
Let $\Lg = Q\diag(\mu_1,\dots,\mu_{n_{\mathrm f}})Q^\top$, where $Q$ is orthogonal and
$0<\mu_1\le\cdots\le\mu_{n_{\mathrm f}}$ are the eigenvalues.
Put $x_t = Q^\top e_t$, $\hat g=Q^\top g$. Since $\kappa I$ is scalar (hence commutes with $\Lg$),
\eqref{eq:dyn} decouples into independent scalar recurrences
\begin{equation}\label{eq:mode}
  x_{i,t+1} = a_i\,x_{i,t} - \eta\kappa\, x_{i,t-\delta} + \eta\hat g_i,
  \qquad a_i \coloneqq 1-\eta\mu_i .
\end{equation}
\end{lemma}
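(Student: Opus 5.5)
The plan is a direct change of coordinates. First, justify the spectral factorization. $\Lg$ is a principal submatrix of the symmetric matrix $L$, so it is real symmetric. The spectral theorem therefore gives an orthogonal $Q$ and real eigenvalues with $\Lg=Q\diag(\mu_1,\dots,\mu_{n_{\mathrm f}})Q^\top$. Positivity $\mu_1>0$ follows from Assumption~\ref{ass:reach}, through the equivalence $\Lg\succ0$ established in Section~\ref{sec:model} via the quadratic form $v^\top\Lg v=\sum A_{ij}(\tilde v_i-\tilde v_j)^2$. This positivity is not needed for the decoupling itself; it only fixes the ordering and the range $a_i<1$ used later.

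Second, multiply \eqref{eq:dyn} on the left by $Q^\top$ and insert $QQ^\top=I$ in front of each state vector. Then
\[
 Q^\top e_{t+1}=Q^\top(I-\eta\Lg)Q\,Q^\top e_t-\eta\kappa\,Q^\top Q\,Q^\top e_{t-\delta}+\eta Q^\top g,
\]
and $Q^\top(I-\eta\Lg)Q=I-\eta\diag(\mu_i)$. For the delayed term, the key observation is that $\eta\kappa I$ is a scalar multiple of the identity, so conjugation by $Q$ leaves it unchanged: $Q^\top(\eta\kappa I)Q=\eta\kappa I$. This gives $x_{t+1}=\diag(1-\eta\mu_i)\,x_t-\eta\kappa\,x_{t-\delta}+\eta\hat g$, where $x_s=Q^\top e_s$ for every $s\ge-\delta$. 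The recurrence is diagonal, and reading off component $i$ gives \eqref{eq:mode} with $a_i=1-\eta\mu_i$.

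Third, record that the transformation is a bijection on histories. Since $Q$ is orthogonal, $e_s=Qx_s$. Hence an arbitrary initial history $e_{-\delta},\dots,e_0$ corresponds to an arbitrary initial history for each scalar mode, and convergence of $e_t$ (to zero, or to the equilibrium after subtracting its constant history) is equivalent to convergence of every $x_{i,t}$, with $\|e_t\|_2=\|x_t\|_2$. Equivalently, the characteristic polynomial of the block system factors as $\prod_i\bigl(z^{\delta+1}-a_iz^{\delta}+\eta\kappa\bigr)$, which justifies the word ``independent''.

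There is no genuine obstacle here. The only point needing care is the commutation remark: the decoupling relies on the delayed gain being uniform, $\eta\kappa I$. A non-scalar or node-dependent verification gain would in general not be diagonalized by the same $Q$, and the reduction would then fail.
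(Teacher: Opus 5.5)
Your proposal is correct and matches the paper's proof: multiply \eqref{eq:dyn} by $Q^\top$, use $Q^\top Q=I$ and $Q^\top\Lg Q=\diag(\mu_i)$, and observe that the scalar delayed term is unchanged, so both terms are diagonal in the same coordinates. Your added remarks are correct supplements that the paper leaves implicit: the spectral-theorem justification, the bijection of initial histories, and the factored characteristic polynomial $\prod_i\bigl(z^{\delta+1}-a_iz^\delta+\eta\kappa\bigr)$.
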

\begin{proof}
Multiply \eqref{eq:dyn} by $Q^\top$ and use $Q^\top Q=I$ and
$Q^\top\Lg Q=\diag(\mu_i)$. Both the instantaneous and delayed terms are then diagonal
in the same coordinates, giving \eqref{eq:mode}.
\end{proof}

The decoupling is the crux of the analysis. An $n_{\mathrm f}$-dimensional \emph{delayed} network,
ordinarily an awkward object, collapses into independent scalar modes, so the entire stability
question reduces, after subtracting $x_{i,\infty}=\hat g_i/(\mu_i+\kappa)$, to the homogeneous
scalar delay recurrence $x_{t+1}-a_i x_t+\beta\,x_{t-\delta}=0$ (with
$\beta=\eta\kappa$), one copy per eigenvalue $\mu_i$ of $\Lg$. We analyse this scalar recurrence once
and then quantify the result over the spectrum.

\begin{proposition}[explicit $\delta=1$ region]\label{prop:delta1}
For $\delta=1$, mode \eqref{eq:mode} is asymptotically stable iff
$\eta\kappa<1$ and $\mu_i<2/\eta+\kappa$.
\end{proposition}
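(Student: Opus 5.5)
For $\delta=1$ the homogeneous part of mode \eqref{eq:mode} (after subtracting the constant equilibrium $x_{i,\infty}=\hat g_i/(\mu_i+\kappa)$, as in the discussion following Lemma~\ref{lem:decouple}) is the second-order recurrence $x_{t+1}-a_i x_t+\beta x_{t-1}=0$ with $\beta=\eta\kappa>0$ and $a_i=1-\eta\mu_i$. Its characteristic polynomial is the real quadratic $p(z)=z^2-a_iz+\beta$. Asymptotic stability is equivalent to both roots lying in $|z|<1$, as stated in Section~\ref{sec:model}. I will therefore apply the Schur--Cohn (Jury) criterion for a monic real quadratic $z^2+c_1z+c_0$: both roots lie in the open unit disc iff $|c_0|<1$ and $|c_1|<1+c_0$.

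With $c_0=\beta$ and $c_1=-a_i$, the conditions read $\beta<1$ (using $\beta>0$) and $|a_i|<1+\beta$. I will justify the criterion briefly, for self-containedness. In the complex-root case the roots have modulus $\sqrt{\beta}$, which is inside the disc iff $\beta<1$. In the real-root case one uses $p(1)>0$, $p(-1)>0$, and the requirement that the product of the roots, $\beta$, be less than $1$.

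Next I translate. Since $\mu_i>0$ and $\eta>0$, we have $a_i<1<1+\beta$ automatically, so the upper half of $|a_i|<1+\beta$ is vacuous. This corresponds to $p(1)=1-a_i+\beta=\eta(\mu_i+\kappa)>0$. The lower half $a_i>-(1+\beta)$ becomes $1-\eta\mu_i>-1-\eta\kappa$, i.e.\ $\eta\mu_i<2+\eta\kappa$, i.e.\ $\mu_i<2/\eta+\kappa$. Together with $\eta\kappa<1$, this gives exactly the claimed region.

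The only point needing care is the converse direction at the boundary. I need to show that if $\eta\kappa\ge1$ or $\mu_i\ge2/\eta+\kappa$, some root has $|z|\ge1$.
\begin{itemize}
\item If $\beta\ge1$, the product of the roots is $\beta\ge1$, so at least one root has modulus $\ge1$.
\item If $p(-1)=1+a_i+\beta\le0$, then since $p(z)\to+\infty$ as $z\to-\infty$, there is a real root $\le-1$.
\end{itemize}
These two facts settle necessity. The sufficiency half is the Jury argument above, which I expect to be the main (though routine) step.
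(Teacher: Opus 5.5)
Your proof is correct and follows the paper's route. The paper likewise applies Jury's conditions to $z^2-a_iz+\eta\kappa$: the condition $p(1)=\eta(\mu_i+\kappa)>0$ is automatic, and the remaining conditions are $\mu_i<2/\eta+\kappa$ and $\eta\kappa<1$. Your root-product and intermediate-value arguments for necessity, and your complex/real case split, only spell out what the paper takes directly from the standard criterion.
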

\begin{proof}
Characteristic $z^2-a_i z+\eta\kappa$ with $a_i=1-\eta\mu_i$; Jury's conditions for
$z^2+c_1z+c_0$ ($c_1=-a_i$, $c_0=\eta\kappa$) reduce to $\eta(\mu_i+\kappa)>0$ (automatic),
$\mu_i<2/\eta+\kappa$, and $\eta\kappa<1$.
\end{proof}

\begin{theorem}[networked stability]\label{thm:main}
Under Assumption~\ref{ass:reach}, the homogeneous part of \eqref{eq:dyn} is asymptotically stable iff
$(a_i,\eta\kappa)\in\mathcal D_\delta$ for every $\mu_i\in\spec(\Lg)$, where $\spec$ denotes the
eigenvalue set and
\[
 \mathcal D_\delta=\{(a,\beta)\in\R\times(0,\infty):
       \text{all roots of }z^{\delta+1}-az^\delta+\beta\text{ have }|z|<1\}.
\]
For $\delta=1$ this is
$\eta\kappa<1$ and $\mu_{\max}(\Lg)<2/\eta+\kappa$. The positive-mode branch for larger delays is treated separately below.
\end{theorem}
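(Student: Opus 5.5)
The proof reduces the network to its scalar modes and then reads off the $\delta=1$ case from Proposition~\ref{prop:delta1}.

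\textbf{Step 1: reduce to the homogeneous network system.} Set $g=0$, or equivalently subtract the constant equilibrium history. Stability of \eqref{eq:dyn} then means that every finite history $e_{-\delta},\dots,e_0$ produces a trajectory converging to zero.

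\textbf{Step 2: decouple into modes.} By Lemma~\ref{lem:decouple}, the orthogonal change of variables $x_t=Q^\top e_t$ maps histories bijectively and preserves the Euclidean norm. Hence $e_t\to0$ for every history if and only if $x_t\to0$ for every history. Since the modes are independent, this holds if and only if each scalar recurrence $x_{i,t+1}=a_ix_{i,t}-\beta x_{i,t-\delta}$ has $x_{i,t}\to0$ for every initial history in $\R^{\delta+1}$.

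\textbf{Step 3: characterize stability of one mode.} Each scalar recurrence is a linear recurrence of order $\delta+1$ with characteristic polynomial $z^{\delta+1}-a_iz^\delta+\beta$. I would use the companion-matrix form: the companion matrix has spectral radius below one if and only if its powers tend to zero, which gives convergence for every history.
\begin{itemize}
\item \emph{Sufficiency.} If all roots satisfy $|z|<1$, the spectral radius is below one and every solution converges to zero.
\item \emph{Necessity.} Suppose some root $z_0$ has $|z_0|\ge1$. If $z_0$ is real, $x_t=z_0^t$ is a real solution that does not tend to zero. If $z_0$ is complex, then $\mathrm{Re}(z_0^t)$ is a real solution that does not tend to zero, since $z_0^t$ and $\bar z_0^t$ cannot both have vanishing real parts asymptotically when $|z_0|\ge1$; one can also argue that the real and imaginary parts cannot both tend to zero.
\end{itemize}
So mode $i$ is stable exactly when $(a_i,\beta)\in\mathcal D_\delta$. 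Here $\beta=\eta\kappa>0$, which matches the $\beta$-range in the definition of $\mathcal D_\delta$.

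\textbf{Step 4: quantify over the spectrum.} Quantifying Step~3 over all $\mu_i\in\spec(\Lg)$ gives the stated equivalence. Assumption~\ref{ass:reach} only ensures that the $\mu_i$ are positive, which matches the spectral setup of Lemma~\ref{lem:decouple}; it is not needed for the equivalence itself.

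\textbf{Step 5: the $\delta=1$ specialization.} Proposition~\ref{prop:delta1} gives, mode by mode, the conditions $\eta\kappa<1$ and $\mu_i<2/\eta+\kappa$. The first condition does not depend on $i$. The second is monotone in $\mu_i$, so it holds for every mode if and only if it holds for $\mu_{\max}(\Lg)$. This yields the explicit region.

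\textbf{Main obstacle.} The only step needing care is necessity in Step~3: a root on or outside the unit circle must produce a real non-decaying solution. The cleanest route is through the companion matrix rather than explicit real solutions. The set of initial histories whose trajectories converge to zero is a real subspace. If it were all of $\R^{\delta+1}$, then $C^t\to0$, where $C$ is the companion matrix, which forces spectral radius $\rho(C)<1$. The roots of the characteristic polynomial are exactly the eigenvalues of $C$, so a root with $|z_0|\ge1$ contradicts this.
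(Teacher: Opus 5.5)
Your proposal is correct and follows the paper's proof. Like the paper, you decouple via Lemma~\ref{lem:decouple} into independent scalar modes, note that the network is stable iff every mode is, and obtain the $\delta=1$ region from Proposition~\ref{prop:delta1} by requiring $\mu_i<2/\eta+\kappa$ for all $\mu_i$, i.e.\ for $\mu_{\max}(\Lg)$. Your companion-matrix argument only spells out the equivalence ``stable iff all roots lie in $|z|<1$'', which the paper states in Section~\ref{sec:model}, and your observation that Assumption~\ref{ass:reach} is not needed for the equivalence itself is accurate.
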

\begin{proof}
By Lemma~\ref{lem:decouple} the system is the direct sum of the scalar modes; it is stable iff each
is. Apply Proposition~\ref{prop:delta1}.
\end{proof}

For $\delta=1$, stability requires both a verification gain below $1/\eta$ and a step size satisfying
the fastest-mode constraint. Even single-step verification loses asymptotic stability at excessive
gain. We next track the \emph{onset} of instability for general delay $\delta$ on the non-negative
mode branch used in Corollary~\ref{cor:dose-explicit}.

\begin{proposition}[oscillatory boundary for positive $a$]\label{prop:boundary}
Fix an integer $\delta\ge1$ and the mode $x_{t+1}-a x_t+\beta x_{t-\delta}=0$ with $\beta=\eta\kappa>0$,
$0<a<1$. As $\beta$ grows from $0$, the first roots to reach the unit circle are a simple
complex-conjugate pair $e^{\pm i\theta_\star}$. The critical dose is
\begin{equation}\label{eq:boundary}
  a=\frac{\sin((\delta+1)\theta)}{\sin(\delta\theta)},\qquad
  \beta_c=\frac{\sin\theta}{\sin(\delta\theta)},\qquad
  \frac{\pi}{2\delta+1}<\theta<\frac{\pi}{\delta+1},
\end{equation}
where the branch contains a unique $\theta=\theta_\star$ for each $a\in(0,1)$.
The mode is asymptotically stable iff $0<\beta<\beta_c$, so
$\kappa_{\max}(a,\delta)=\beta_c/\eta$. The critical pair has angular frequency
$\theta_\star$ radians per round (nominal period $2\pi/\theta_\star$ rounds).
The sampled mode need not have an integer period.
\end{proposition}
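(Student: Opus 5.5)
The plan is to locate the first value of $\beta$ at which a root of $p_\beta(z)=z^{\delta+1}-az^{\delta}+\beta$ meets the unit circle, then use a crossing-direction argument to show that the mode is stable below this value and unstable above it. The starting point is the case $\beta=0$. Here the roots are $0$ (with multiplicity $\delta$) and $a\in(0,1)$, so all lie inside the unit disk. Roots depend continuously on $\beta$, so stability can only be lost at some $\beta>0$ for which $p_\beta$ has a root on $|z|=1$.

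\textbf{Step 1: the boundary curve.} First I exclude real crossings. We have $p_\beta(1)=1-a+\beta>0$, so $z=1$ is never a root. At $z=-1$ the value is $(-1)^{\delta}(-1-a)+\beta$, which vanishes only when $\delta$ is even and $\beta=1+a$; I will later check that $\beta_c<1+a$. For a non-real root $z=e^{i\theta}$ I divide $p_\beta(z)=0$ by $z^{\delta}$ to get $e^{i\theta}-a+\beta e^{-i\delta\theta}=0$. Its imaginary part gives $\beta=\sin\theta/\sin(\delta\theta)$. Its real part gives $a=\cos\theta+\beta\cos(\delta\theta)=\sin((\delta+1)\theta)/\sin(\delta\theta)$, which is \eqref{eq:boundary}. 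On $\theta\in(\pi/(2\delta+1),\pi/(\delta+1))$ the function $a(\theta)$ equals $1$ at the left endpoint, because there $(\delta+1)\theta=\pi-\delta\theta$, and equals $0$ at the right endpoint. For monotonicity I write the numerator of $a'(\theta)$ with product-to-sum identities as
\[
\tfrac12\bigl[\sin((2\delta+1)\theta)-(2\delta+1)\sin\theta\bigr].
\]
This is negative because $|\sin(n\theta)|<n\sin\theta$ for $0<\theta<\pi$ and $n\ge2$. Hence each $a\in(0,1)$ corresponds to a unique $\theta_\star$ on this branch. On the same interval $\beta(\theta)>0$, and a direct comparison shows $\beta_c<1+a$.

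\textbf{Step 2: minimality and transversality.} Next I must show two things: every other unit-circle solution, meaning $\theta$ in other intervals where $\sin\theta/\sin(\delta\theta)>0$, has a larger $\beta$; and every crossing is outward. For the direction of crossing I differentiate $p_\beta(z)=0$ implicitly, which gives $dz/d\beta=-1/p'(z)$. At $z=e^{i\theta}$ I then evaluate the sign of $\operatorname{Re}(\bar z\,dz/d\beta)$. Using $\beta=-z^{\delta}(z-a)$ on the circle, this sign should reduce to the sign of $\beta^2\delta+\ldots$, more precisely to a quadratic form that is positive whenever $\beta>0$. Positivity would show that every crossing is outward and simple. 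If all crossings are outward, the number of roots outside the disk is nondecreasing in $\beta$. This yields the full equivalence: the mode is stable iff $\beta$ lies below the smallest crossing value, and unstable for every larger $\beta$. It remains to identify the smallest crossing with $\theta_\star$. On the other branches I plan to compare $\beta(\theta)$ at equal $a$. If that comparison becomes messy, the fallback is to invoke Kuruklis' exact characterization \cite{Kuruklis} of the stability region of $x_{n+1}-ax_n+bx_{n-k}=0$. Specialized to $0<a<1$, that result gives exactly the condition $\beta<\sin\theta_\star/\sin(\delta\theta_\star)$.

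\textbf{Step 3: conclusion and main obstacle.} Combining Steps 1 and 2 with $\beta=\eta\kappa$ gives $\kappa_{\max}=\beta_c/\eta$. The critical roots are $e^{\pm i\theta_\star}$, so the angular frequency is $\theta_\star$ per round. Since $\theta_\star/\pi$ need not be rational, the period need not be an integer. I expect the main obstacle to be Step 2, specifically ruling out an earlier crossing on another $\theta$-branch and establishing the sign of the crossing direction uniformly. This is where I would rely on Kuruklis if the direct argument does not close.
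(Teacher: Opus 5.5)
Your overall architecture matches the paper's: start from $\beta=0$, locate the unit-circle locus, and show all crossings are outward so the number of roots outside the disk can only grow. Your Step~1 is a valid alternative to the paper's phase-function argument. The numerator $\tfrac12[\sin((2\delta+1)\theta)-(2\delta+1)\sin\theta]<0$ does make $a(\theta)$ strictly decreasing from $1$ to $0$ on the branch, which gives a unique $\theta_\star$.

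\textbf{The gap: minimality in Step~2.} You leave this step as a plan, with a citation as fallback. Yet it is exactly what turns a point of the locus into the stability threshold. It is also what guarantees that $e^{\pm i\theta_\star}$ are the \emph{only} unit-circle roots at $\beta_c$. The missing idea is to take moduli instead of working with the quotient $\sin\theta/\sin(\delta\theta)$. At a root $z=e^{i\theta}$, the relation $z^\delta(z-a)=-\beta$ gives
\[
  \beta=|e^{i\theta}-a|=\sqrt{1+a^2-2a\cos\theta}.
\]
Equivalently, square and add your real and imaginary equations. For fixed $a>0$ this is strictly increasing in $\theta\in(0,\pi]$. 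So at equal $a$ the dose is ordered by angle, and you only need $\theta_\star$ to be the smallest admissible angle. Your own Step~1 supplies that:
\begin{itemize}
\item the same derivative computation, valid on all of $(0,\pi/\delta)$, gives $a(\theta)\ge1$ on $(0,\pi/(2\delta+1)]$, so no root with $a\in(0,1)$ lies there;
\item $\sin(\delta\theta)=0$ with $0<\theta<\pi$ contradicts the imaginary equation;
\item $\theta_\star$ is unique on the branch.
\end{itemize}
Every other admissible angle, including $\theta=\pi$ for even $\delta$, is therefore larger and has strictly larger dose. This also makes your check $\beta_c<1+a$ immediate. The paper gets the same ordering from the strictly increasing phase $\delta\theta+\arg(e^{i\theta}-a)$, whose admissible values $(2j+1)\pi$ are ordered by $j$.

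\textbf{Transversality.} This step is only conjectured. Simplicity must also come before the implicit differentiation, not follow from positivity. Simplicity holds because $p_\beta'(z)=z^{\delta-1}((\delta+1)z-\delta a)$ has no zero on $|z|=1$: the nonzero root has modulus $\delta a/(\delta+1)<1$. Then, using $z^\delta=-\beta/(z-a)$,
\[
  \Re\Bigl(\bar z\,\frac{dz}{d\beta}\Bigr)=\frac{\delta+1+\delta a^2-(2\delta+1)a\cos\theta}{\beta\,|(\delta+1)z-\delta a|^2}.
\]
Substituting $2a\cos\theta=1+a^2-\beta^2$ turns the numerator into $\tfrac12[(1-a^2)+(2\delta+1)\beta^2]$. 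This is positive because $a<1$, not merely because $\beta>0$.

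\textbf{The Kuruklis fallback.} Citing Kuruklis is legitimate, since the paper itself notes the boundary reparametrizes his region. But it replaces your argument rather than completing it. You would still need to check that his side conditions hold automatically for $0<a<1$, $\beta>0$, and that his critical angle is your $\theta_\star$.
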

\begin{proof}
See \ref{app:proofs}.
\end{proof}

At $a=0$, the polynomial is $z^{\delta+1}+\beta$: all $\delta+1$ roots have modulus
$\beta^{1/(\delta+1)}$ and reach the unit circle simultaneously at $\beta=1$. Thus the threshold
extends to $a=0$, but the assertion of a first isolated pair does not. Negative $a$ requires a separate
boundary analysis. For example, at $a=-1/2$, $\delta=2$ and $\beta=1/2$,
\[
  z^3+\tfrac12 z^2+\tfrac12=(z+1)(2z^2-z+1)/2.
\]
The first crossing is at $z=-1$; the other roots $(1\pm i\sqrt7)/4$ have modulus $1/\sqrt2$.
For $0<\beta<1/2$, stability follows from $|z^2(z+1/2)|\ge1/2>\beta$ on $|z|=1$
and Rouch\'e's theorem. This case is outside Proposition~\ref{prop:boundary}.

Equation~\eqref{eq:boundary} specifies the boundary parametrically. Chebyshev polynomials give
an equivalent algebraic equation and a direct proof of the monotonicity needed for a design rule.

\begin{lemma}[Chebyshev form; the dose is monotone]\label{lem:cheb}
Let $U_0(c)=1$, $U_1(c)=2c$ and $U_{j+1}(c)=2cU_j(c)-U_{j-1}(c)$ be the Chebyshev
polynomials of the second kind. With $c=\cos\theta$ and
$\sin(\delta\theta)/\sin\theta=U_{\delta-1}(c)$, the boundary \eqref{eq:boundary} is algebraic,
\[
  a=\frac{U_\delta(c)}{U_{\delta-1}(c)},\qquad \beta_c=\frac{1}{U_{\delta-1}(c)},
  \qquad c\in\Big(\cos\tfrac{\pi}{\delta+1},\,\cos\tfrac{\pi}{2\delta+1}\Big)\ (a\in(0,1)),
\]
with $\beta_c(a,1)\equiv1$, $\beta_c(0,\delta)\equiv1$, and the exact value
$\beta_c(a,2)=\tfrac12(\sqrt{a^2+4}-a)$ for $a\in[0,1)$. For each $\delta\ge2$ the dose is
strictly decreasing in $a\in(0,1)$; at $\delta=1$ it is constant. For each fixed $a\in(0,1)$,
it is strictly decreasing in the integer delay $\delta\ge1$, with $\beta_c<1$ for $\delta\ge2$.
At $a=0$ it remains $1$ for every delay. The limit as $a\uparrow1$ is
$\beta_c(1,\delta)=2\sin(\pi/(4\delta+2))$.
\end{lemma}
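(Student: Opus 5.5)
We start from the parametrization \eqref{eq:boundary} of Proposition~\ref{prop:boundary} and substitute the identity $\sin((j+1)\theta)=\sin\theta\,U_j(\cos\theta)$. This gives $a=U_\delta(c)/U_{\delta-1}(c)$ and $\beta_c=1/U_{\delta-1}(c)$ directly. The angle window $\pi/(2\delta+1)<\theta<\pi/(\delta+1)$ maps to the stated $c$-interval because $\cos$ is decreasing on $(0,\pi)$. The special values are then immediate:
\begin{itemize}
\item for $\delta=1$, $U_0\equiv1$, so $\beta_c\equiv1$;
\item at $a=0$ we use the remark after Proposition~\ref{prop:boundary}, where all roots of $z^{\delta+1}+\beta$ hit the circle at $\beta=1$;
\item for $\delta=2$, we have $U_1=2c$, $U_2=4c^2-1$, so $a=(4c^2-1)/(2c)=1/\beta_c-\beta_c$ with $\beta_c=1/(2c)>0$. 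Solving $\beta_c^2+a\beta_c-1=0$ gives $\tfrac12(\sqrt{a^2+4}-a)$;
\item the limit $a\uparrow1$ corresponds to $\theta\downarrow\pi/(2\delta+1)$. There $\sin((\delta+1)\theta)=\sin(\delta\theta)$, and $\beta_c=\sin\theta/\sin(\delta\theta)$. Writing $\theta=\pi/(2\delta+1)$, we get $\sin(\delta\theta)=\cos(\theta/2)$, hence $\beta_c=2\sin(\theta/2)=2\sin(\pi/(4\delta+2))$.
\end{itemize}

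For monotonicity in $a$ at fixed $\delta\ge2$, we view both $a$ and $\beta_c$ as functions of $\theta$ on the branch. On the branch, $\beta_c=\sin\theta/\sin\delta\theta$ is strictly increasing in $\theta$. Its logarithmic derivative is $\cot\theta-\delta\cot\delta\theta$, which is positive because $x\mapsto x\cot x$ is strictly decreasing on $(0,\pi)$ and $\delta\theta<\pi$. Next we show $a(\theta)=\sin((\delta+1)\theta)/\sin(\delta\theta)$ is strictly decreasing. Its derivative has the sign of
\[
(\delta+1)\cos((\delta+1)\theta)\sin\delta\theta-\delta\sin((\delta+1)\theta)\cos\delta\theta
=\tfrac12\bigl[\sin(-\theta)(2\delta+1)\bigr]+\tfrac12\sin((2\delta+1)\theta),
\]
which is $\tfrac12[\sin((2\delta+1)\theta)-(2\delta+1)\sin\theta]<0$ because $|\sin(kx)|<k\sin x$ for $x\in(0,\pi/k)$ with $k\ge2$. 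Hence $\theta\mapsto a$ is a decreasing bijection onto $(0,1)$, which also recovers uniqueness of $\theta_\star$. Composing the two monotone maps, $\beta_c$ is strictly decreasing in $a$.

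Monotonicity in $\delta$ at fixed $a\in(0,1)$ is the step I expect to be the main obstacle, since the branches for different $\delta$ live on different angle intervals. I would argue by root location rather than by the formula. Suppose $\beta=\beta_c(a,\delta)$. Then $z^{\delta+1}-az^\delta+\beta$ has roots $e^{\pm i\theta_\star}$, so $\beta=|e^{i\theta}-a|$ at $\theta=\theta_\star$. More generally, on $|z|=1$ a root requires $\beta=|z-a|$ together with an argument condition. This suggests the characterization $\beta_c(a,\delta)=\min\{|e^{i\theta}-a|:\theta\text{ solves the phase condition}\}$. The phase condition is $\delta\theta+\arg(e^{i\theta}-a)\equiv\pi \pmod{2\pi}$, and its smallest positive solution $\theta_\star(\delta)$ decreases in $\delta$, because the left side increases in $\theta$ and increases in $\delta$. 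Since $|e^{i\theta}-a|$ is increasing in $\theta\in(0,\pi)$, the threshold $\beta_c$ strictly decreases in $\delta$. I would check that this "first crossing" is exactly the pair identified in Proposition~\ref{prop:boundary}, which holds on the branch. Finally, $\beta_c<1$ for $\delta\ge2$ follows from strict decrease together with $\beta_c(a,1)=1$.
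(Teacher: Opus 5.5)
Your proof is correct. For the delay monotonicity and the special values it is essentially the paper's argument. The paper also uses the phase equation $\delta\theta+\phi_a(\theta)=\pi$, shows $\theta_{\delta+1}<\theta_\delta$, and uses that $|e^{i\theta}-a|$ increases in $\theta$. It obtains the $\delta=2$ value by eliminating $c$, and the $a\uparrow1$ limit from $\theta_\star\to\pi/(2\delta+1)$.

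You differ in the monotonicity in $a$. The paper stays in the algebraic variable $c$:
\begin{itemize}
\item it gets $a_\delta'(c)>0$ from the continued fraction $a_\delta=2c-1/a_{\delta-1}$ and the derivative recursion $a_\delta'=2+a_{\delta-1}'/a_{\delta-1}^2$;
\item it gets $U_{\delta-1}'>0$ from the zero locations of $U_{\delta-1}$ (all below $\cos(\pi/\delta)$) and the product representation;
\item it concludes $\mathrm d\beta_c/\mathrm da=-U_{\delta-1}'/(U_{\delta-1}^2a_\delta')<0$.
\end{itemize}
Since $c$ decreases in $\theta$, these are the same two facts you prove: $\beta_c$ increases and $a$ decreases along the branch. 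You prove them by elementary trigonometry in $\theta$, namely the decrease of $x\cot x$ and a product-to-sum identity.

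Your route is more self-contained. It also shows that $\theta\mapsto a$ is a bijection of the branch onto $(0,1)$, which justifies the correspondence $a\uparrow1\Leftrightarrow\theta\downarrow\pi/(2\delta+1)$ that you use for the limit. The paper's route fits its algebraic Chebyshev presentation of the boundary.

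There is one small slip. You justify $\sin((2\delta+1)\theta)<(2\delta+1)\sin\theta$ by an inequality stated only for $x\in(0,\pi/k)$, but on the branch $\theta>\pi/(2\delta+1)$, so that hypothesis fails. The conclusion still holds: there $(2\delta+1)\theta\in(\pi,2\pi)$, so $\sin((2\delta+1)\theta)<0$. Moreover, $|\sin kx|<k\sin x$ in fact holds on all of $(0,\pi)$ for integers $k\ge2$.
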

\begin{proof}
See \ref{app:cheb}.
\end{proof}

On the branch $a=1-\eta\mu\in[0,1)$,
the ceiling does not increase as the delay $\delta$ grows and does not decrease as $\mu$ grows. Thus the slowest
mode attains the binding constraint (all modes share the same ceiling when $\delta=1$),
which the corollary now pins down. Two remarks sharpen the claim. First, because \eqref{eq:boundary}
is a reparametrization of the exact, necessary-and-sufficient stability region of Kuruklis
\cite{Kuruklis}, $\kappa_{\max}(a,\delta)$ is not merely a sufficient bound but the \emph{true}
boundary on the stated branch; it is the discrete-time, delay-difference analogue of the continuous consensus delay margin
$\tau^\star=\pi/(2\lambda_{\max})$ of Olfati-Saber and Murray \cite{OlfatiSaber}. Second, the integer
$\delta$ is the operative case: the proof in~\ref{app:cheb} establishes both monotonicity statements for
arbitrary integer delays, with the boundary cases stated separately.

\begin{corollary}[dose limit and binding mode]\label{cor:dose-explicit}
Assume $\eta\mu_{\max}(\Lg)\le1$ (so each $a_i\in[0,1)$). Then the network dose limit is set by the
\emph{slowest} grounded mode,
$\kappa<\kappa_{\max}\big(1-\eta\mu_{\min}(\Lg),\delta\big)$.
For nested hard-corrector sets $R\subseteq R'$ with at least one free node remaining, Cauchy
interlacing gives $\mu_{\min}(L_{\mathrm g}(R'))\ge\mu_{\min}(L_{\mathrm g}(R))$ and
$\mu_{\max}(L_{\mathrm g}(R'))\le\mu_{\max}(L_{\mathrm g}(R))$. At fixed $\eta$, adding hard
correctors therefore preserves the assumed step-size condition and \textbf{does not lower the
delay-induced dose ceiling}; the ceiling is unchanged at $\delta=1$. In
particular, reaching the dose ceiling loses asymptotic stability, and exceeding it gives a growing
linear mode. Increasing delay can also cross the ceiling when the binding $a$ is positive.
The oscillatory loss of stability is the regime we call
\emph{verification-induced oscillation}.
\end{corollary}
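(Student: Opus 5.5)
The proof combines Theorem~\ref{thm:main} with the monotonicity statements of Lemma~\ref{lem:cheb} and the boundary characterization of Proposition~\ref{prop:boundary}.

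\textbf{Step 1: every mode lies on the analysed branch.} Assumption~\ref{ass:reach} gives $\Lg\succ0$, so every eigenvalue satisfies $\mu_i>0$. The hypothesis $\eta\mu_{\max}\le1$ then places every $a_i=1-\eta\mu_i$ in $[0,1)$. By Theorem~\ref{thm:main}, the network is stable iff $\eta\kappa<\beta_c(a_i,\delta)$ for every $i$. On $(0,1)$ this is Proposition~\ref{prop:boundary}. The endpoint $a_i=0$ occurs only if $\eta\mu_{\max}=1$; there the threshold is $1$ by the Rouch\'e/modulus remark following Proposition~\ref{prop:boundary}.

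\textbf{Step 2: the slowest mode binds.} Lemma~\ref{lem:cheb} shows that $\beta_c(\cdot,\delta)$ is non-increasing in $a$ on $[0,1)$: it is strictly decreasing on $(0,1)$ for $\delta\ge2$, equals $1$ at $a=0$, and is constant when $\delta=1$. Continuity at $a=0$ extends the monotonicity to the closed endpoint. Hence $\min_i\beta_c(a_i,\delta)$ is attained at the largest $a_i$, namely $1-\eta\mu_{\min}$. The network condition is therefore exactly $\kappa<\beta_c(1-\eta\mu_{\min},\delta)/\eta=\kappa_{\max}(1-\eta\mu_{\min},\delta)$.

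\textbf{Step 3: adding correctors.} $L_{\mathrm g}(R')$ is a principal submatrix of $L_{\mathrm g}(R)$, because it deletes the rows and columns of $R'\setminus R$. Cauchy interlacing for symmetric matrices then gives both eigenvalue inequalities. Assumption~\ref{ass:reach} persists under enlarging $R$: each free component for $R'$ lies inside a free component for $R$, and a path from it to $R$ must hit $R'$. Consequently $\eta\mu_{\max}(R')\le\eta\mu_{\max}(R)\le1$, so the hypothesis is preserved. Also $\mu_{\min}$ does not decrease, so the binding $a$ does not increase, and by Step~2 the ceiling does not decrease. At $\delta=1$ the ceiling is identically $1/\eta$.

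\textbf{Step 4: the boundary statements.} At $\beta=\beta_c$ the binding mode has a root on the unit circle, so asymptotic stability fails. Above $\beta_c$ we need a root strictly outside, which gives a growing mode. I expect this to be the main obstacle, since Proposition~\ref{prop:boundary} literally states only the iff for stability. I would argue it as follows. Roots depend continuously on $\beta$. By the parametrization, the unit circle is met at a single $\beta_c$ for fixed $a$: $z=1$ would require $1-a+\beta=0$, which is impossible; $z=-1$ would require $\beta=(1+a)$ with appropriate signs, which exceeds $\beta_c$ and is ruled out on the branch. The crossing must be transversal outward. This follows either from Kuruklis's exact region, which says no roots return inside, or from a short derivative computation $dz/d\beta$ at $z=e^{i\theta_\star}$ showing $\operatorname{Re}(\bar z\,dz/d\beta)>0$. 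Finally, the crossing roots are non-real because $\theta_\star\in(0,\pi)$, which gives the oscillatory character. For the delay remark, Lemma~\ref{lem:cheb} shows $\beta_c$ strictly decreasing in $\delta$ for fixed positive $a$, so at fixed $\eta\kappa$ a sufficiently large $\delta$ drives $\beta_c$ below it. The limit $\beta_c\to$ small values as $\delta\to\infty$ follows from $2\sin(\pi/(4\delta+2))\to0$ together with monotonicity in $a$.
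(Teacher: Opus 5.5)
Your Steps 1--3 agree with the paper's proof. The monotonicity in Lemma~\ref{lem:cheb}, with $\beta_c(0,\delta)=1$, makes the mode with the largest $a_i=1-\eta\mu_{\min}$ binding. Interlacing of nested principal submatrices gives both eigenvalue inequalities and preserves $\eta\mu_{\max}\le1$. Your path argument for Assumption~\ref{ass:reach} is correct but redundant, since interlacing already gives $\mu_{\min}(L_{\mathrm g}(R'))\ge\mu_{\min}(L_{\mathrm g}(R))>0$.

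The gap is in Step~4, where you assert that for fixed $a$ the unit circle is met only at $\beta_c$. This is false for $\delta\ge2$. The phase $\delta\theta+\phi_a(\theta)$ from \ref{app:proofs} increases from $0$ to $(\delta+1)\pi$ on $[0,\pi]$, and every odd multiple of $\pi$ it attains gives a unit-circle root at dose $|e^{i\theta}-a|$. For even $\delta$ this includes $z=-1$ at $\beta=1+a$: that root is not ruled out, it merely appears above $\beta_c$. For $\delta\ge3$ the level $3\pi$ is attained inside $(0,\pi)$, giving a second complex pair at a larger dose.

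Consequently, checking $\operatorname{Re}(\bar z\,\mathrm dz/\mathrm d\beta)>0$ only at $e^{\pm i\theta_\star}$ does not show that a root stays strictly outside for every $\beta>\beta_c$. A priori, the pair that exits at $\beta_c$ could return to the circle at a later crossing dose. Kuruklis's region alone says only that asymptotic stability is not regained, which is weaker than the asserted growing mode. The missing ingredient, which the paper supplies in \ref{app:proofs}, is positivity of the radial derivative at \emph{every} unit-circle root. Its numerator $\delta+1+\delta a^2-(2\delta+1)a\cos\theta$ is at least $(1-a)(\delta+1-\delta a)>0$ uniformly in $\theta$. Hence all crossings are outward, and the number of roots outside the closed disk never decreases as $\beta$ grows.

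Your argument for the delay remark also overreaches. Since $\beta_c(a,\delta)$ decreases in $a$, the value $\beta_c(1,\delta)=2\sin(\pi/(4\delta+2))$ is a \emph{lower} bound for $\beta_c(a,\delta)$, so its vanishing tells you nothing here. In fact $\theta_\star\to0$ gives $\beta_c(a,\delta)\downarrow 1-a$, which equals $\eta\mu_{\min}$ for the binding mode. For any dose $\eta\kappa<1-a$, Rouch\'e's theorem ($|z^\delta(z-a)|\ge1-a>\eta\kappa$ on $|z|=1$) gives stability at every delay. So a large enough delay does not always cross the ceiling at fixed dose. The corollary only says that delay \emph{can} cross it, and strict decrease in $\delta$ from Lemma~\ref{lem:cheb} suffices. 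A dose with $\beta_c(a,\delta+1)\le\eta\kappa<\beta_c(a,\delta)$ is stable at delay $\delta$ and not at $\delta+1$.
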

\begin{proof}
The largest $a_i=1-\eta\mu_i$ corresponds to $\mu_{\min}$, and Lemma~\ref{lem:cheb}
makes its threshold no larger than any other mode's. Proposition~\ref{prop:boundary}
and the $a=0$ case give the network criterion. Interlacing of the nested principal
submatrices gives the two eigenvalue inequalities and preserves $\eta\mu_{\max}\le1$.
Applying the same threshold monotonicity proves the placement comparison.
\end{proof}

\begin{figure}[t]
  \centering
  \includegraphics[width=0.72\linewidth]{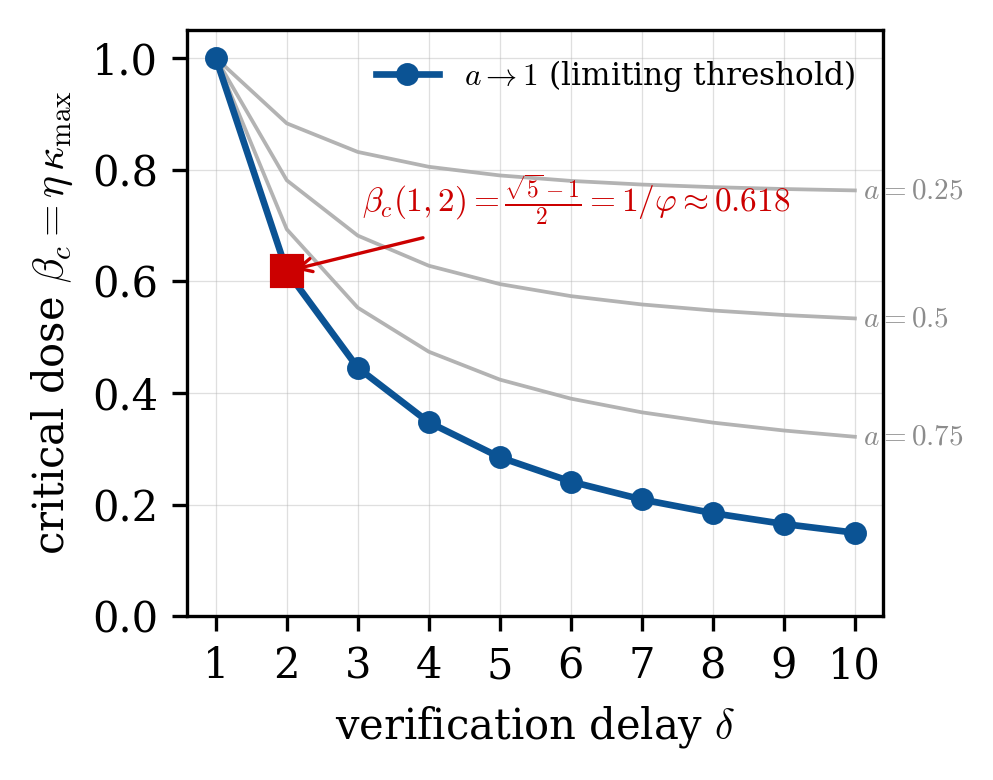}
  \caption{\textbf{The verification dose ceiling falls with delay.} Critical dose
  $\beta_c=\eta\kappa_{\max}$ versus verification delay $\delta$ for the limit $a\uparrow1$ (blue) and
  three positive modes (grey); each scalar mode is stable below its threshold. The ceiling decreases
  with $\delta$ for these positive values of $a$; at $\delta=2$ the limiting blue curve equals
  the inverse golden ratio $(\sqrt5-1)/2\approx0.618$ (red).
  More verification \emph{latency} therefore forces a strictly smaller safe verification
  \emph{strength}: the dose--delay tradeoff that the rest of the paper exploits.}
  \label{fig:dose}
\end{figure}

\section{Corrector placement for a coherence objective}\label{sec:place}

Given a budget of $k$ correctors, \emph{where} should they go? This section first bounds the
equilibrium error for the hard-grounded model, then defines a separate soft-pin placement
problem on a fixed state space. We optimize a coherence surrogate,
whose reduction is submodular. The greedy rule of Algorithm~\ref{alg:greedy} achieves a fraction
$1-1/e$ of the optimal reduction (Theorem~\ref{thm:place}). This is a guarantee for the chosen
placement metric; it is not a guarantee for every measure of truth-tracking error.

The delay drops out of the equilibrium of \eqref{eq:dyn}:
\begin{equation}\label{eq:ss}
  (\Lg+\kappa I)\,e_\infty = g \quad\Longrightarrow\quad e_\infty(R)=(\Lg+\kappa I)^{-1}g,
\end{equation}
so \emph{stability} is governed by $(\spec\Lg,\kappa,\delta)$ while \emph{truth-tracking error} is
governed by the resolvent $(\Lg+\kappa I)^{-1}$ acting on the fault forcing. Placement $R$ affects
both quantities; these are distinct criteria rather than independently adjustable levers.

\begin{corollary}[bounded steady-state error]\label{cor:bibo}
Whenever the loop is stable, $e_t\to e_\infty$ and the truth-tracking error obeys the resolvent bound
\begin{equation}\label{eq:bibo}
  \|e_\infty\|_2 \;\le\; \frac{\|g\|_2}{\mu_{\min}(\Lg)+\kappa},
\end{equation}
independent of the step size $\eta$ and the delay $\delta$. Thus the same $\mu_{\min}(\Lg)$ that the
dose limit (Corollary~\ref{cor:dose-explicit}) makes the binding stability constraint also controls the
residual error. Adding hard correctors never decreases $\mu_{\min}(\Lg)$ (Cauchy interlacing),
so the bound does not increase if the forcing on the remaining nodes is unchanged. Under the
step-size condition of Corollary~\ref{cor:dose-explicit}, the dose ceiling also does not decrease.
\end{corollary}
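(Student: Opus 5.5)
The plan is to separate the statement into four parts: convergence to the equilibrium, the resolvent bound~\eqref{eq:bibo}, its independence of $\eta$ and $\delta$, and the two monotonicity remarks under adding correctors.

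\textbf{Convergence.} I will subtract the constant history $e_\infty$ from~\eqref{eq:dyn}. Substituting $e_t\equiv e_\infty$ gives $e_\infty=(I-\eta\Lg)e_\infty-\eta\kappa e_\infty+\eta g$, which is equivalent to~\eqref{eq:ss} after dividing by $\eta>0$. The operator $\Lg+\kappa I$ is invertible because $\Lg\succeq0$ and $\kappa>0$, and in fact $\Lg\succ0$ under Assumption~\ref{ass:reach}. The deviation $e_t-e_\infty$ then satisfies the homogeneous recurrence. By the stated definition of stability and Theorem~\ref{thm:main}, it converges to zero for every finite initial history, so $e_t\to e_\infty$.

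\textbf{Bound.} I will work in the eigenbasis of Lemma~\ref{lem:decouple}. With $\Lg=Q\diag(\mu_i)Q^\top$, we have $(\Lg+\kappa I)^{-1}=Q\diag\big((\mu_i+\kappa)^{-1}\big)Q^\top$. This matrix is symmetric positive definite, so its spectral norm is its largest eigenvalue, $1/(\mu_{\min}+\kappa)$. Since $Q$ is orthogonal, $\|e_\infty\|_2\le\|g\|_2/(\mu_{\min}(\Lg)+\kappa)$. The right side involves only $\Lg$, $\kappa$ and $g$, which gives the claimed independence of $\eta$ and $\delta$. Those parameters enter only the stability hypothesis, not the value of $e_\infty$.

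\textbf{Monotonicity.} For nested $R\subseteq R'$ with at least one free node left, $L_{\mathrm g}(R')$ is a principal submatrix of $L_{\mathrm g}(R)$. Cauchy interlacing then gives $\mu_{\min}(L_{\mathrm g}(R'))\ge\mu_{\min}(L_{\mathrm g}(R))$, as already used in Corollary~\ref{cor:dose-explicit}. If the forcing on the remaining free nodes is unchanged, that is, if $g'$ is the restriction of $g$ to $V_{\mathrm f}(R')$, then $\|g'\|_2\le\|g\|_2$. Together with the larger denominator, the bound does not increase. The final sentence of the statement, that the dose ceiling does not decrease under the step-size condition, is exactly the placement comparison of Corollary~\ref{cor:dose-explicit}, so I will cite it directly.

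\textbf{Main obstacle.} The only delicate point is interpretive: the phrase ``forcing on remaining nodes unchanged'' must be made precise. I would state explicitly that $g'=g|_{V_{\mathrm f}(R')}$, so its norm can only drop, and that the bound compares upper bounds rather than the actual errors. The rest is routine.
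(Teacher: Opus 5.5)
Your proposal is correct and follows the paper's proof essentially step for step. Both use the equilibrium equation $(\Lg+\kappa I)e_\infty=g$ from \eqref{eq:dyn}, the spectral norm $1/(\mu_{\min}(\Lg)+\kappa)$ of the symmetric resolvent with submultiplicativity, subtraction of $e_\infty$ to reduce convergence to the stable homogeneous recurrence, and restriction of the unchanged forcing with Cauchy interlacing (plus Corollary~\ref{cor:dose-explicit}) for the monotonicity claims. Your remarks that invertibility needs only $\Lg\succeq0$ and $\kappa>0$, and that the comparison under added correctors concerns the bounds rather than the actual errors, are accurate refinements.
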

\begin{proof}
At equilibrium \eqref{eq:dyn} gives $(\Lg+\kappa I)e_\infty=g$; since $\Lg\succ0$ is symmetric,
$\|(\Lg+\kappa I)^{-1}\|_2=1/(\mu_{\min}(\Lg)+\kappa)$, and submultiplicativity gives \eqref{eq:bibo}.
Subtracting $e_\infty$ gives the stable homogeneous recurrence, proving convergence for every
initial history. Restricting an unchanged forcing to fewer free nodes cannot increase its norm.
\end{proof}

The bound \eqref{eq:bibo} shows that pinning helps, but not \emph{which} nodes to pin: it sees only
$\mu_{\min}(\Lg)$. For the placement objective, fix a baseline hard-corrector set $R_0$ and
$G=L[V\setminus R_0,V\setminus R_0]$. The set $R_0$ may be empty, as in the numerical example;
$\kappa>0$ still makes $G+\kappa I$ positive definite. In the rest of this section, $R$ denotes
additional \emph{soft} pins on this fixed state space, each applying a finite restoring term
$-w e_i$ in the steady-state equation rather than imposing $e_i=0$. We write $u_i$ for the
standard coordinate vector at node $i$, to distinguish it from a node's error. Their steady-state
operator is
\begin{equation}\label{eq:Mop}
  M(R)=G+W_R+\kappa I,\qquad W_R=w\sum_{i\in R}u_iu_i^\top .
\end{equation}
If $g$ is sampled once, held constant in time, and has mean zero and covariance $\sigma^2 I$, then
\[
  \mathbb E\|M(R)^{-1}g\|_2^2=\sigma^2\tr M(R)^{-2}.
\]
This differs from the \emph{coherence} $H(R)=\tr M(R)^{-1}$ used in leader-selection objectives
\cite{ClarkSubmod,MackinPatterson}. For the separate, continuous-time, delay-free surrogate
$\mathrm dx=-M(R)x\,\mathrm dt+\mathrm dW_t$, where $W_t$ is standard vector Brownian motion
with independent unit-variance components, the stationary covariance $P$ solves
$M P+P M=I$. Hence $P=\tfrac12 M^{-1}$ and the stationary energy $\mathbb E\|x\|_2^2$ is $H(R)/2$.
This noise model is not the constant forcing in \eqref{eq:dyn}, nor does it give the stochastic
variance of the delayed discrete-time loop.

Both traces are non-increasing under added pins: each ordered eigenvalue of $M$ is
non-decreasing under the positive-semidefinite update, so both $\sum_j\lambda_j(M)^{-1}$
and $\sum_j\lambda_j(M)^{-2}$ are non-increasing. They need not rank two non-nested placements alike.
For example, take the six-node graph with edges
$\{02,03,04,05,12,25,34,35\}$ (unit weights; nodes $0,\ldots,5$, with $ij$ denoting
$\{i,j\}$), $R_0=\varnothing$, $\kappa=1/2$, and $w=2$:
\begin{center}
\begin{tabular}{c|cc}
Placement $R$ & $H(R)=\tr M^{-1}$ & $\tr M^{-2}$\\\hline
$\{0,1\}$ & $2.440088$ & $1.640781$\\
$\{1,4\}$ & $2.383134$ & $1.666253$
\end{tabular}
\end{center}
The second placement improves coherence but worsens the static mean-square error. We therefore
state the optimization problem explicitly for the coherence surrogate:
\begin{equation}\label{eq:placeprob}
  \min_{|R|=k}\ H(R)=\tr M(R)^{-1},
\end{equation}
where every node in the fixed set $V\setminus R_0$ is eligible and $1\le k\le |V\setminus R_0|$.
The optimization is over this static objective; it does not impose a delayed-loop stability
constraint. Implementing the selected soft pins in a dynamic controller requires a separate
stability check for that controller.

The coherence is supermodular, so its reduction $\rho(R)=H(\varnothing)-H(R)$ is monotone and submodular,
and greedy achieves at least a fraction $1-1/e$ of the optimal \emph{reduction} \cite{Nemhauser}
(Theorem~\ref{thm:place}).
The greedy step is cheap and interpretable, because its \emph{marginal gain} has the Sherman--Morrison
closed form
\begin{equation}\label{eq:marg}
  \Delta_i(R):=H(R)-H(R\cup\{i\})=\frac{w\,\|M(R)^{-1}u_i\|^2}{1+w\,u_i^\top M(R)^{-1}u_i}\;\ge\;0,
\end{equation}
a \emph{resolvent centrality} that measures the coherence reduction from one additional pin and
depends on the full inverse, rather than node degree alone. Greedy pins the node of largest $\Delta_i$, updates
$M(R)^{-1}$ by one rank-one step, and repeats (Algorithm~\ref{alg:greedy}).
Here \emph{submodular reduction} means diminishing returns:
$\Delta_i(R)\ge\Delta_i(S)$ whenever $R\subseteq S$ and $i\notin S$.
Equivalently, the decreasing cost $H$ is supermodular.

\begin{algorithm}[H]
\DontPrintSemicolon
\caption{Greedy corrector placement for problem~\eqref{eq:placeprob}: spend the budget one node at a
time, each step pinning the node of largest marginal gain.}\label{alg:greedy}
\KwIn{fixed baseline Laplacian $G$, dose $\kappa>0$, pin weight $w>0$, budget $k$}
\KwOut{corrector set $R$ with $|R|=k$}
$R \leftarrow \varnothing$\;
\For{$j \leftarrow 1$ \KwTo $k$}{
  \lForEach{$i \notin R$}{$\Delta_i(R) \leftarrow w\,\|M(R)^{-1}u_i\|^2 \big/ \bigl(1+w\,u_i^\top M(R)^{-1}u_i\bigr)$ \tcp*[r]{\eqref{eq:marg}}}
  $i^\star \leftarrow \arg\max_{i \notin R}\,\Delta_i(R)$ \tcp*[r]{largest coherence reduction}
  $R \leftarrow R \cup \{i^\star\}$\;
}
\Return $R$\;
\end{algorithm}

\begin{theorem}[greedy coherence-reduction guarantee]\label{thm:place}
Let $G$ be a fixed symmetric graph Laplacian or a principal submatrix of one, and let $\kappa,w>0$.
The coherence $H(R)=\tr M(R)^{-1}$ is non-increasing and supermodular. Its reduction
$\rho(R)=H(\varnothing)-H(R)$ is non-decreasing and submodular with $\rho(\varnothing)=0$.
For each integer budget $1\le k\le |V\setminus R_0|$, the greedy rule satisfies
\begin{equation}\label{eq:greedybound}
  \rho(R_{\mathrm{greedy}})\;\ge\;\Bigl(1-\bigl(1-\tfrac1k\bigr)^{k}\Bigr)\,\rho(R^\star)\;\ge\;\bigl(1-\tfrac1e\bigr)\,\rho(R^\star),
\end{equation}
where $R^\star\in\arg\max_{|R|=k}\rho(R)$ is an optimal size-$k$ set. In terms of the minimized
objective, the last inequality is equivalent to
\[
  H(R_{\mathrm{greedy}})\le\frac{H(\varnothing)}{e}
       +\Bigl(1-\frac1e\Bigr)H(R^\star).
\]
This is not a multiplicative approximation factor for $H$, and gives no approximation factor
for $\tr M(R)^{-2}$.
\end{theorem}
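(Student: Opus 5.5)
The plan is to establish supermodularity of $H$ directly from the Sherman--Morrison marginal gain \eqref{eq:marg}, then invoke Nemhauser--Wolsey--Fisher \cite{Nemhauser}. First, positive definiteness: $G\succeq0$, because a Laplacian is positive semidefinite and so is any principal submatrix of one. Hence $M(R)=G+W_R+\kappa I\succeq\kappa I\succ0$ for every $R$, so all inverses exist and $H$ is well defined. Monotonicity follows from \eqref{eq:marg}, whose numerator is nonnegative and whose denominator is at least $1$. Alternatively, it follows from the Loewner order argument already given in the text. Hence $\rho$ is non-decreasing with $\rho(\varnothing)=0$.

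The core step, and the main obstacle, is diminishing returns: $\Delta_i(R)\ge\Delta_i(S)$ for $R\subseteq S$ and $i\notin S$. I would avoid comparing the Sherman--Morrison ratio term by term. Instead I would use a continuous interpolation. For a fixed positive definite $M$, set
\[
f(t)=\tr(M+t\,u_iu_i^\top)^{-1}.
\]
Then
\[
H(R)-H(R\cup\{i\})=\int_0^w \tr\bigl[(M(R)+t u_iu_i^\top)^{-1}u_iu_i^\top(M(R)+t u_iu_i^\top)^{-1}\bigr]\,dt,
\]
and the integrand equals $u_i^\top N^{-2}u_i$ with $N=M(R)+tu_iu_i^\top$. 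Since $R\subseteq S$ gives $M(R)\preceq M(S)$, we get $N_R\preceq N_S$. Matrix inversion is operator-antitone, so $N_R^{-1}\succeq N_S^{-1}\succ0$. The remaining issue is that squaring is not operator monotone, so $N_R^{-2}\succeq N_S^{-2}$ does not follow in general.

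I would therefore work with the quantity $\|N^{-1}u_i\|^2$ rather than with $N^{-2}$ itself. The first approach I would try is a second interpolation $N(s)=N_R+s\,\Delta$ with $\Delta=N_S-N_R\succeq0$. Differentiating gives
\[
\tfrac{d}{ds}\,u^\top N^{-2}u=-u^\top\bigl(N^{-1}\Delta N^{-2}+N^{-2}\Delta N^{-1}\bigr)u=-2\,\mathrm{Re}\,(N^{-1}u)^\top\Delta(N^{-2}u).
\]
This expression is not sign-definite in general, so I expect this step to need the structure of the problem. The fix I would pursue is the standard one from the leader-selection literature \cite{ClarkSubmod,MackinPatterson}. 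Because $G$ is a (principal submatrix of a) Laplacian, $G+\kappa I+W$ is a nonsingular M-matrix. Its inverse is therefore entrywise nonnegative, and by the resolvent identity it is entrywise non-increasing as $W$ grows. Both $N_R$ and $N_S$ are M-matrices of this kind and $N_S-N_R$ is a nonnegative diagonal matrix, so
\[
N_R^{-1}-N_S^{-1}=N_R^{-1}(N_S-N_R)N_S^{-1}\ge0
\]
entrywise. Hence $0\le N_S^{-1}u_i\le N_R^{-1}u_i$ entrywise, which gives $\|N_S^{-1}u_i\|^2\le\|N_R^{-1}u_i\|^2$. Integrating over $t\in[0,w]$ yields $\Delta_i(R)\ge\Delta_i(S)$, which is supermodularity of $H$ and submodularity of $\rho$. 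This is exactly where the Laplacian hypothesis on $G$ enters; the conclusion can fail for a general positive definite matrix.

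Finally, $\rho$ is monotone, submodular and normalized, and Algorithm~\ref{alg:greedy} selects the maximal $\Delta_i(R)$ at each step, which is exactly the maximal marginal gain of $\rho$. The cardinality-constrained theorem of \cite{Nemhauser} then gives the factor $1-(1-1/k)^k$. The elementary inequality $(1-1/k)^k\le e^{-1}$ gives the $1-1/e$ bound. Substituting $\rho=H(\varnothing)-H$ and rearranging gives the displayed form of the bound for $H$. The closing caveats need no proof beyond noting that this rearrangement is affine rather than multiplicative, and that the six-node example already shows $H$ and $\tr M^{-2}$ can rank placements differently.
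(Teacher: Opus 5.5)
Your proof is correct and follows the paper's argument: the M-matrix structure gives an entrywise non-negative inverse, and you integrate the marginal $(N^{-2})_{ii}=\|N^{-1}u_i\|^2$ along the $i$-th pin weight, show it is non-increasing in the other pins, and then invoke the Nemhauser greedy bound. The only difference is that you get this monotonicity from the resolvent identity $N_R^{-1}-N_S^{-1}=N_R^{-1}(N_S-N_R)N_S^{-1}\ge0$ rather than from the paper's mixed partial $\partial_j\partial_i f=2Q_{ij}(Q^2)_{ij}\ge0$, and the resolvent identity is simply the finite-difference form of the same fact.
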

\begin{proof}
See \ref{app:place}.
\end{proof}

On a graph of three $5$-cliques chained by unit-weight bridges, with $R_0=\varnothing$,
$\kappa=0.3$ and $w=12$ (Fig.~\ref{fig:placement}), the rule is well separated
from the two baselines at $k=8$: greedy cuts the coherence $\tr M(R)^{-1}$ from $9.1$ to
$2.3$, against $2.9$ for degree-based and $2.5$ for random placement. Its first four picks are
$4,13,8,0$ in the code's zero-based indexing. The supermodular structure and the $(1-1/e)$ guarantee are \emph{inherited}
from leader selection in linear multi-agent systems \cite{ClarkLeader,ClarkSubmod} and hold for the
coherence objective \eqref{eq:placeprob}; these comparisons do not measure the gap to an exact
optimum or establish a guarantee for a nonlinear error-propagation map. Hard grounding removes rows and columns: for nested corrector sets, the smallest
grounded eigenvalue cannot decrease and the largest cannot increase. One way to see both directions
is to extend a vector on the remaining nodes by zeros at the newly grounded nodes: its Rayleigh
quotient is unchanged, while the minimization and maximization domains are restricted. For a
three-node path, successively grounding its endpoints gives the spectra
$\{0,1,3\}$, $\{(3-\sqrt5)/2,(3+\sqrt5)/2\}$, and $\{2\}$.
This differs from a finite soft pin in \eqref{eq:Mop}, which adds a positive-semidefinite diagonal
matrix on a fixed state space. These spectral facts do not establish an interior optimal budget
$k^\star$: such a claim needs an explicit cost or another competing constraint, beyond the
fixed-budget placement problem studied here.

\begin{figure}[t]
  \centering
  \includegraphics[width=0.92\linewidth]{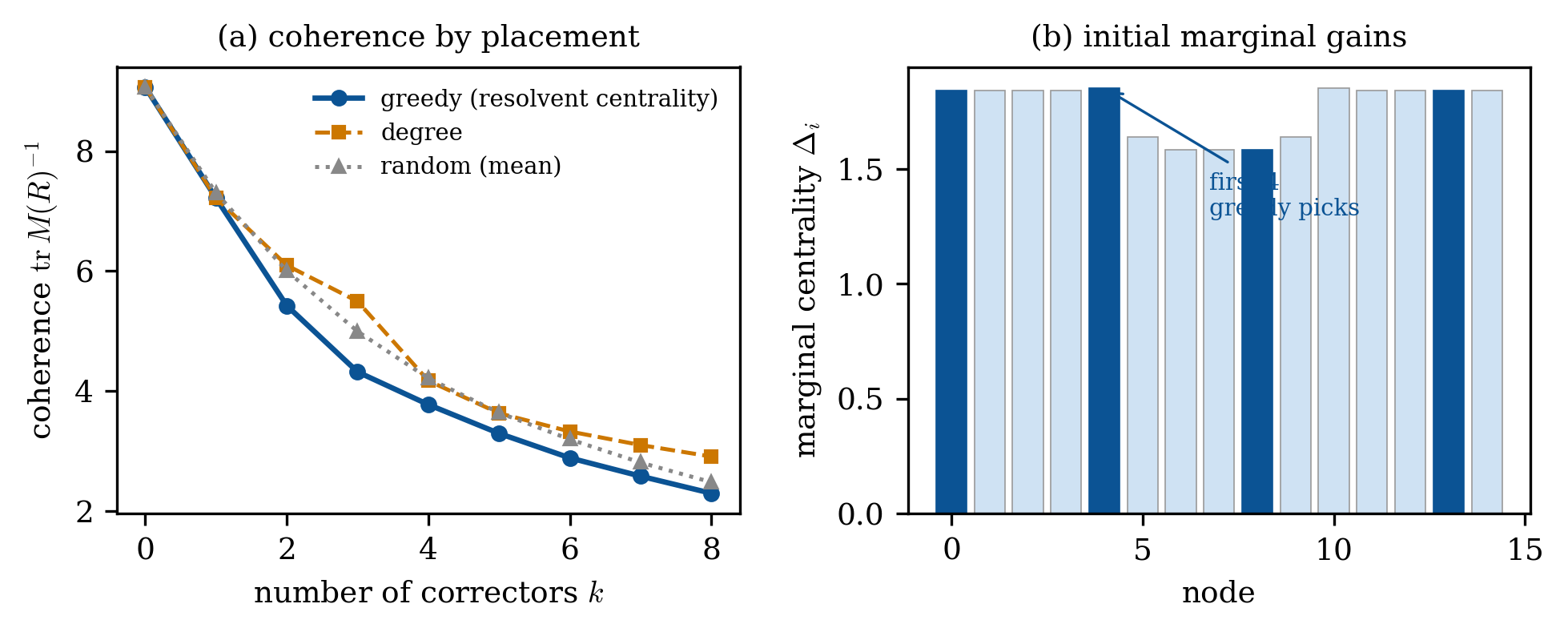}
  \caption{\textbf{Where to place correctors.} (a)~On three $5$-cliques chained by bridge edges, greedy
  selection by the resolvent centrality \eqref{eq:marg} is compared with degree-based and random
  placement (mean over $300$ orders) using coherence $\tr M(R)^{-1}$. An exact optimum is not plotted.
  (b)~Initial marginal gains $\Delta_i(\varnothing)$ per node; the first four sequential greedy picks
  are marked in dark blue. Later picks use gains recomputed after the preceding pins.}
  \label{fig:placement}
\end{figure}

\section{Two coupled delays}\label{sec:two}

Let gossip carry integer latency $d\ge0$ and verification integer latency $\delta\ge1$:
\[
  e_{t+1}=e_t-\eta\Lg\,e_{t-d}-\eta\kappa\,e_{t-\delta}+\eta g .
\]
Lemma~\ref{lem:decouple} still applies, giving per mode
$x_{t+1}=x_t-\eta\mu\,x_{t-d}-\eta\kappa\,x_{t-\delta}$. A unit-circle root
$\lambda=e^{i\theta}$ satisfies the pair
\begin{equation}\label{eq:twodelay}
  \cos\theta-1+\eta\mu\cos(d\theta)+\eta\kappa\cos(\delta\theta)=0,\quad
  \sin\theta-\eta\mu\sin(d\theta)-\eta\kappa\sin(\delta\theta)=0.
\end{equation}
With $h=\max(d,\delta)$ and a supplied history $e_{-h},\ldots,e_0$, the characteristic polynomial is
\[
 P(z)=z^{h+1}-z^h+pz^{h-d}+qz^{h-\delta},\qquad p=\eta\mu,\quad q=\eta\kappa.
\]
It generally has four terms when $d,\delta>0$ are distinct. Related trinomial delay equations are
studied in \cite{KipnisNigmatullin}; we derive the unit-circle locus for the polynomial above directly.

\begin{theorem}[two-delay unit-circle locus]\label{thm:twodelay}
For $d\ne\delta$ and $\theta\in(0,\pi)$ with $\sin((\delta-d)\theta)\ne0$, a root
$z=e^{i\theta}$ occurs exactly at
\[
  p(\theta)=\frac{\sin(\delta\theta)-\sin((\delta+1)\theta)}{\sin((\delta-d)\theta)},\qquad
  q(\theta)=\frac{\sin((d+1)\theta)-\sin(d\theta)}{\sin((\delta-d)\theta)}.
\]
At angles with $\sin((\delta-d)\theta)=0$, the original two linear equations
\eqref{eq:twodelay} must instead be checked for consistency; when consistent, they give additional
lines in the $(p,q)$ plane. The real-root loci are $p+q=0$ for $z=1$ and
$p(-1)^d+q(-1)^\delta=2$ for $z=-1$. These sets comprise the unit-circle locus, not a selection
of its stability-bounding arcs. In each connected open cell of its complement, the number of roots
inside the unit disk is constant; stable cells are those with all $h+1$ roots inside.
The origin itself is marginal, with a root at $z=1$.
\end{theorem}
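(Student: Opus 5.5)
The plan is to treat \eqref{eq:twodelay} as a linear system in the unknowns $(p,q)$ for each fixed $\theta$, and then to add a root-continuity argument for the statements about cells. First I will divide the characteristic polynomial $P(z)$ by $z^{h}$. On $|z|=1$ this gives the equivalent equation $z-1+pz^{-d}+qz^{-\delta}=0$. With $z=e^{i\theta}$, its real and imaginary parts are exactly \eqref{eq:twodelay}. Since $P$ has real coefficients, the roots $e^{\pm i\theta}$ come in pairs, so it suffices to take $\theta\in[0,\pi]$. The endpoints give the real-root loci directly: $z=1$ yields $p+q=0$, and $z=-1$ yields $-2+p(-1)^d+q(-1)^\delta=0$.

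For $\theta\in(0,\pi)$, I will write the system in complex form as $p\,e^{-id\theta}+q\,e^{-i\delta\theta}=1-e^{i\theta}$. Its coefficient matrix
\[
\begin{pmatrix}\cos d\theta & \cos\delta\theta\\ -\sin d\theta & -\sin\delta\theta\end{pmatrix}
\]
has determinant $-\cos d\theta\sin\delta\theta+\sin d\theta\cos\delta\theta=-\sin((\delta-d)\theta)$. When this determinant is nonzero, Cramer's rule gives the unique solution. For example,
\[
p=\frac{(1-\cos\theta)(-\sin\delta\theta)-\sin\theta\cos\delta\theta}{-\sin((\delta-d)\theta)}=\frac{\sin\delta\theta-\sin((\delta+1)\theta)}{\sin((\delta-d)\theta)},
\]
after using $\sin\theta\cos\delta\theta+\cos\theta\sin\delta\theta=\sin((\delta+1)\theta)$. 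The expression for $q$ follows in the same way by symmetry, with $d$ and $\delta$ exchanged and the sign flipped. Checking the signs in these trigonometric identities is the only computational step.

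At angles where $\sin((\delta-d)\theta)=0$, the matrix is singular with rank one, since its rows are nonzero. The system then has either no solution or a whole line of solutions $(p,q)$, which I will check directly as the statement says.

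For the topological claims, the roots of the monic polynomial $P$ of fixed degree $h+1$ depend continuously on $(p,q)$, for instance by Rouch\'e's theorem or Hurwitz's theorem. The number of roots in the open unit disk can therefore change only when a root crosses $|z|=1$. Such a crossing requires $(p,q)$ to lie in the union of the sets described above, which is exactly the unit-circle locus. Hence this count is constant on each connected component of the complement, and the stable cells are the ones where it equals $h+1$.

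Finally, at $(p,q)=(0,0)$ we have $P=z^h(z-1)$, which has a root at $z=1$, so the origin is marginal. The main obstacle is stating the completeness cleanly: every unit-circle root falls into exactly one of the cases $\theta=0$, $\theta=\pi$, generic $\theta$, or singular $\theta$. The formulas parametrize the whole locus, not only its stability-bounding arcs.
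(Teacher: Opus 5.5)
Your proposal is correct and follows the paper's own argument essentially step for step: divide by $z^h$ on the unit circle, solve the linear system by Cramer's rule with determinant $\pm\sin((\delta-d)\theta)$, treat the singular angles and $z=\pm1$ separately, then use continuity of the roots of the monic $P$ to get constant root counts on each cell. Two cosmetic slips: the last term of your displayed Cramer numerator for $p$ should be $+\sin\theta\cos\delta\theta$ (your final formula is nonetheless right), and the rank is one at singular angles because both columns are unit vectors, not because the rows are nonzero---for example, $d=1$, $\delta=3$, $\theta=\pi/2$ makes the first row vanish.
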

\begin{proof}
See \ref{app:twodelay}.
\end{proof}

For example, at $(d,\delta)=(1,4)$ and $\theta=\pi/3$, the denominator vanishes and
\eqref{eq:twodelay} reduces to $p-q=1$. On that line,
\[
 P(z)=(z^2-z+1)(z^3+qz+q).
\]
At $q=1/10$, the quadratic roots lie on the unit circle while the cubic roots lie strictly inside:
on $|z|=1$, $|qz+q|\le1/5<|z^3|$, so Rouch\'e's theorem applies. Thus singular cases cannot be
discarded when constructing the boundary.

\begin{corollary}[equal-delay combined-gain ceiling]\label{cor:equal}
If $d=\delta=\tau\ge1$, the two terms merge into a single lag at $a=1$,
$x_{t+1}-x_t+\eta(\mu+\kappa)x_{t-\tau}=0$. For positive combined gain, it is stable iff
\[
  0<\eta(\mu+\kappa)<\beta_c(1,\tau)=2\sin\frac{\pi}{4\tau+2},
  \qquad\text{e.g.}\quad \beta_c(1,2)=\frac{\sqrt5-1}{2}=\frac1\varphi .
\]
For the network, this becomes $\eta(\mu_{\max}(\Lg)+\kappa)<\beta_c(1,\tau)$: the largest
grounded eigenvalue binds, unlike the instantaneous-gossip case of Corollary~\ref{cor:dose-explicit}.
Here $\varphi=(1+\sqrt5)/2$ is the golden ratio.
\end{corollary}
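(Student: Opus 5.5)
The plan is to reduce the network statement to the scalar threshold already established. Setting $d=\delta=\tau$ in the two-delay recurrence, the per-mode equation from Lemma~\ref{lem:decouple} reads $x_{t+1}=x_t-\eta\mu\,x_{t-\tau}-\eta\kappa\,x_{t-\tau}$. The two delayed terms act on the same past state, so they combine into the single lag $x_{t+1}-x_t+\eta(\mu+\kappa)x_{t-\tau}=0$. This is exactly the mode of Proposition~\ref{prop:boundary} with $a=1$ and $\beta=\eta(\mu+\kappa)$, and its characteristic polynomial is $z^{\tau+1}-z^\tau+\beta$.

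The main point is that $a=1$ sits at the edge of the branch $0<a<1$ treated in Proposition~\ref{prop:boundary} and Lemma~\ref{lem:cheb}, so a direct argument at $a=1$ is needed.

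\textbf{Real roots.} At $\beta=0$ the roots are $z=1$ (simple) and $z=0$ (multiplicity $\tau$), so the origin is marginal. For small $\beta>0$, perturbing the simple root gives $z\approx1-\beta$, which moves inside the disk, while the roots near zero have modulus $\approx\beta^{1/(\tau+1)}$. A root $z=1$ requires $\beta=0$. A root $z=-1$ requires $2(-1)^\tau=-\beta$ (up to sign conventions), i.e.\ $\beta=2$; I will check that this value exceeds the claimed ceiling.

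\textbf{Complex roots.} I will solve for $z=e^{i\theta}$ directly. Writing $e^{i\tau\theta}(e^{i\theta}-1)=-\beta$, taking moduli gives $\beta=2\sin(\theta/2)$. The argument condition is $\tau\theta+\theta/2+\pi/2\equiv\pi \pmod{2\pi}$, i.e.\ $\theta=\pi(1+4j)/(2\tau+1)$ for integer $j$. The smallest positive $\beta$ comes from the smallest admissible $\theta$, namely $\theta=\pi/(2\tau+1)$, giving $\beta_c=2\sin(\pi/(4\tau+2))$. This matches the limit in Lemma~\ref{lem:cheb}, and I can cross-check it there by continuity as $a\uparrow1$. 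Since roots depend continuously on $\beta$, all roots lie strictly inside the disk on $0<\beta<\beta_c$, and at $\beta_c$ a pair lies on the circle. The number of roots inside is constant between consecutive crossing values of $\beta$, so stability holds exactly on $(0,\beta_c)$. For the ``iff'' direction, I also need instability for every $\beta\ge\beta_c$, not just just above it. To get this, I will show that each crossing moves roots outward: compute $\mathrm{d}|z|/\mathrm{d}\beta>0$ at unit-circle roots by implicit differentiation, using the standard Kuruklis-type transversality. This monotonicity is the main obstacle, and citing Kuruklis \cite{Kuruklis} at $a=1$ is the fallback. For $\tau=2$, $2\sin(\pi/10)=(\sqrt5-1)/2=1/\varphi$, which is a direct identity.

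\textbf{Network statement.} By Lemma~\ref{lem:decouple}, the network is stable iff every mode is. The combined gain $\eta(\mu_i+\kappa)$ increases with $\mu_i$ while the ceiling $\beta_c(1,\tau)$ does not depend on $\mu$. The binding constraint is therefore $\mu_{\max}(\Lg)$, which yields $\eta(\mu_{\max}(\Lg)+\kappa)<\beta_c(1,\tau)$.
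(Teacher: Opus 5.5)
Your proposal is essentially the paper's proof. The paper uses the same steps: the simple root at $1$ moves inward with $\mathrm dz/\mathrm d\beta=-1$, the first unit-circle crossing is at $\theta_\star=\pi/(2\tau+1)$ with $\beta=2\sin(\theta/2)$, and the largest $\mu$ then binds. The transversality you flag as the main obstacle is immediate: unit-circle roots are simple, and the radial-derivative formula from the proof of Proposition~\ref{prop:boundary} at $a=1$ has numerator $(2\tau+1)(1-\cos\theta)>0$. One harmless slip is that $(-1)^{\tau+1}-(-1)^\tau+\beta=\beta-2(-1)^\tau$, so a root at $-1$ with positive gain occurs only for even $\tau$, at $\beta=2$, which still exceeds the ceiling.
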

\begin{proof}
See \ref{app:twodelay}.
\end{proof}

This is an exact special case, not a universal ordering of delay configurations. Monotonicity in
$a=1-\eta\mu$ for instantaneous gossip does not compare different positive gossip delays. In
particular, at $\tau=2$ the golden-ratio threshold applies to the \emph{combined} gain $p+q$.
For instantaneous gossip, the verification-only threshold $\beta_c(1-p,2)$ approaches $1/\varphi$
from above as $p\downarrow0$.

A fixed-total-delay example shows why a comparison class is necessary. Fix $p=1/10$ and
$q=98/125$. At $(d,\delta)=(1,1)$ the polynomial is $z^2-z+221/250$; both roots have modulus
$\sqrt{221/250}<1$. With the same gains and total delay $d+\delta=2$, the allocation $(0,2)$ gives
\[
 z^3-\tfrac9{10}z^2+\tfrac{98}{125}
   =(z+\tfrac7{10})(z^2-\tfrac85 z+\tfrac{28}{25}).
\]
Its conjugate roots $(4\pm2i\sqrt3)/5$ have modulus $\sqrt{28/25}>1$.
Thus equal delays are stable here while unequal delays are unstable. This example compares a fixed
sum of delays; it does not establish an ordering for a fixed maximum delay or a fixed verification delay.

\begin{figure}[t]
  \centering
  \includegraphics[width=0.7\linewidth]{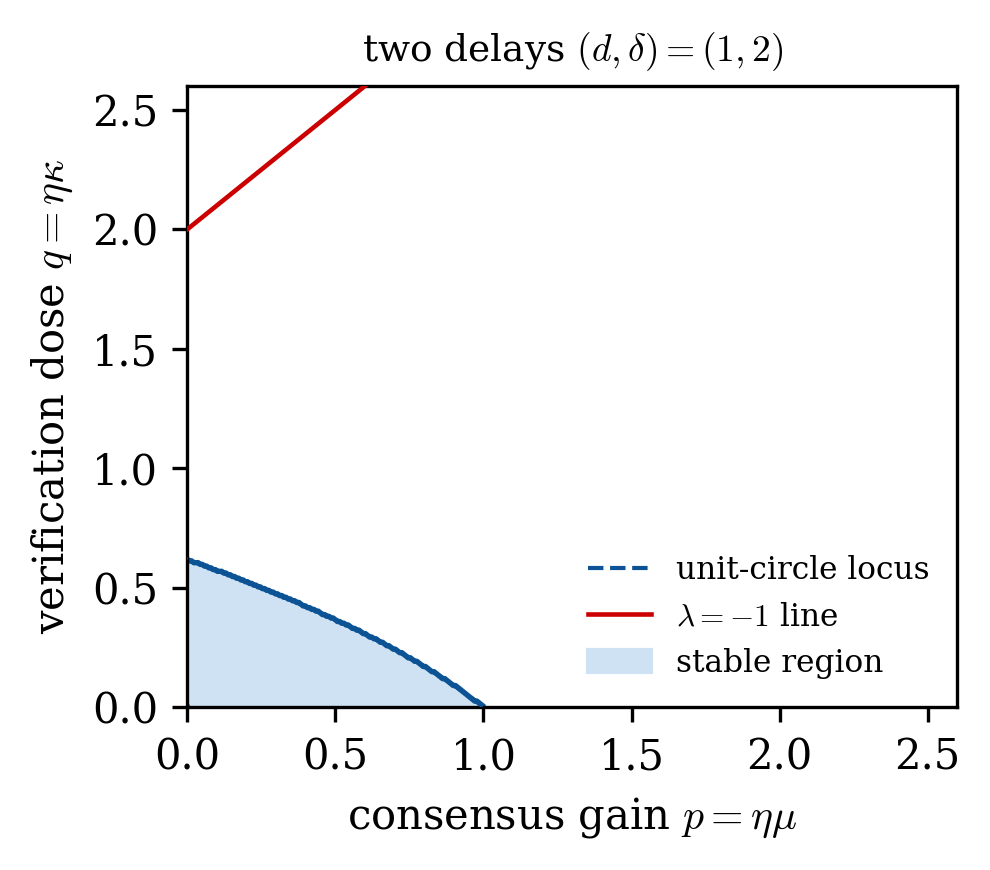}
  \caption{\textbf{Stability for a specified pair of delays.} Stability region (shaded) in the
  $(p,q)=(\eta\mu,\eta\kappa)$ plane for communication delay $d=1$ and verification delay $\delta=2$,
  computed by testing all characteristic roots. The regular unit-circle locus is dashed
  (Theorem~\ref{thm:twodelay}); the $\lambda=-1$ line is red and here non-binding.
  The dose ceiling on the $q$-axis is $1/\varphi\approx0.618$. This plot makes no general comparison
  across delay allocations.}
  \label{fig:twodelay}
\end{figure}

\section{Experiments and empirical scope}\label{sec:emp}

We ask three questions, with an archival and an expanded study for RQ2. RQ1: does the nonlinear loop lose stability
near the predicted dose limit $\kappa_{\max}(\delta)$? RQ2: in a real grounded factual debate,
how do final-step error and finite-window variability differ across verifier prompts and delays?
RQ3: do LLM agents transmit an externally computed delayed correction signal, and how does their
finite-window signed-error variability change with its lag?

The studies use different outcomes. RQ1 measures the late half peak-to-peak range of a simulated
real-valued error. Historical RQ2 uses an automated entailment score; expanded RQ2 separates
binary factual labels from abstentions and unresolved answers. RQ3 measures the temporal
standard deviation of a normalized signed numeric error. We retain the original name
``amplitude'' where used in the records, but give its definition for each study; values are
not directly comparable across these outcomes. Unless stated otherwise, temporal standard
deviations are population standard deviations, dividing by the number of observed states.

\subsection{Onset at the predicted dose limit (RQ1)}

We test whether the \emph{linear} dose limit predicts onset in the \emph{nonlinear} system with a
saturating verifier,
\[
  e_{t+1}=(I-\eta\Lg)e_t-\eta\kappa\tanh(e_{t-\delta})+\eta g
\]
($\tanh$ acts componentwise and $\tanh'(0)=1$ matches the zero-state linearization).
The ten-node graph has two hard pins (nodes $0,5$ in zero-based code indexing), leaving
$n_{\mathrm f}=8$; it uses a random spanning tree with additional independent edges of probability
$0.3$ and seed $20260621$. We set $\eta=0.5/\mu_{\max}$ and a forcing $0.04$ at the second
free node, zero elsewhere, starting from an all-zero history. Each run has $4000$ updates.
Its late amplitude is half the largest nodewise peak-to-peak range over the final $400$ states.
The observed onset
$\kappa_{\mathrm{crit}}$ tracks the predicted ceiling
$\kappa_{\max}=\beta_c(1-\eta\mu_{\min},\delta)/\eta$:
\begin{center}
\begin{tabular}{cccc}
\hline
$\delta$ & $\kappa_{\max}$ (theory) & $\kappa_{\mathrm{crit}}$ (nonlinear) & ratio\\
\hline
1 & 16.30 & 16.57 & 1.017\\
2 & 10.21 & 10.38 & 1.017\\
3 & \phantom{0}7.45 & \phantom{0}7.57 & 1.017\\
\hline
\end{tabular}
\end{center}
With nonzero forcing, zero need not be an equilibrium of the nonlinear loop; the comparison uses
the zero-state linear reference and is not an exact local stability calculation at the forced equilibrium.
The reported $\kappa_{\mathrm{crit}}$ is the first sampled dose whose late amplitude exceeds $10^{-3}$.
The sweep uses $40$ equally spaced ratios $\kappa/\kappa_{\max}$ from $0.35$ to $1.65$: its spacing is
$0.0333$, and the samples bracketing the predicted boundary are $0.9833$ and $1.0167$. Thus the common
ratio $1.017$ is consistent with onset at the theoretical boundary at this resolution; it does not
resolve a separate stabilizing shift due to saturation. At $\delta=2$ and
$\kappa=1.25\kappa_{\max}$, the dominant nonzero Fourier bin of the first free node's
final $600$ states gives a nominal period $9.38$, compared with $2\pi/\theta_\star=9.72$
at the linear boundary. This finite-window frequency estimate is not an exact period or a
measurement at the threshold. An LLM front-end requires its own identification and cannot
be assumed to preserve this update map.
\begin{figure}[t]
  \centering
  \includegraphics[width=0.9\linewidth]{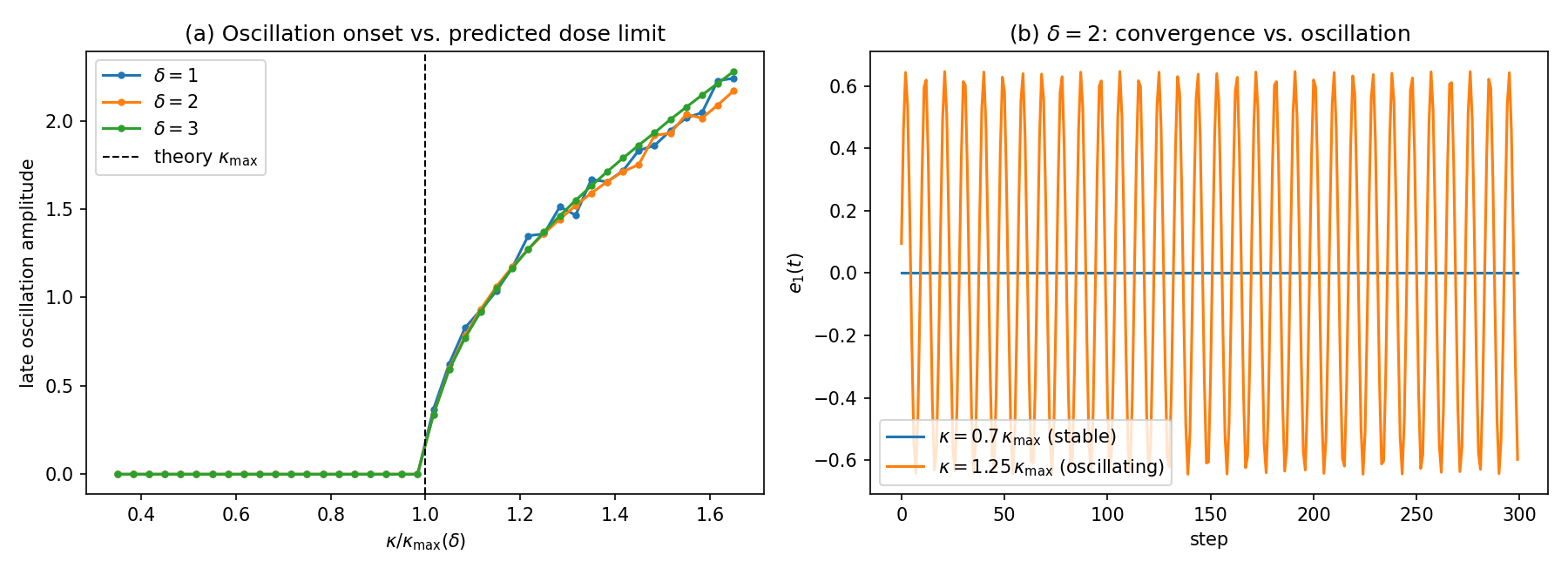}
  \caption{\textbf{Synthetic onset is consistent with the predicted ceiling at the sweep resolution.} (a)~oscillation amplitude collapses
  onto the predicted threshold $\kappa/\kappa_{\max}=1$ for $\delta=1,2,3$; (b)~the $\delta{=}2$
  trajectory converges below the ceiling and oscillates above it.}
  \label{fig:synth}
\end{figure}

\subsection{Grounded factual debate: protocol-dependent outcomes (RQ2)}

We reanalyze archived debates among instances of Qwen3.6-35B on selected PsiloQA
\cite{PsiloQA} and TruthfulQA \cite{TruthfulQA} questions. The archived filename suffix \texttt{tqa} refers to TruthfulQA: all $30$ selected
questions and gold answers match that dataset; the earlier TriviaQA label was erroneous.
Selection used high cold-answer natural-language-inference (NLI) error
within a candidate set, so these are difficulty-selected subsets rather than representative samples.
Three free agents receive peer answers and a verifier note; grounded variants also supply four fixed
wrong answers. Evidence is a Wikipedia passage for PsiloQA and a supplied gold-answer/alias statement
for TruthfulQA. The verifier sees the first $600$ characters of that evidence and is itself an LLM,
not an oracle. The score $(1-P(\mathrm{entailment})+P(\mathrm{contradiction}))/2$ from
\texttt{microsoft/deberta-large-mnli}\footnote{Model card:
\url{https://huggingface.co/microsoft/deberta-large-mnli}.}
is a bounded error surrogate; no human calibration of this score is reported.

The $30$-question PsiloQA sweep has $120$ runs ($18$ states; weak/strong prompts and two stored
delay labels). A separate $20$-question sweep has $80$ runs ($36$ states; strong/forceful prompts).
In the latter, the final-step mean score is below $0.1$ in $16/20$ questions with the strong prompt
and $8/20$ with the forceful prompt, at either delay. This is an \emph{endpoint success fraction},
not a convergence test. On the analogous TruthfulQA subset, the corresponding fractions are $4/20$
for the strong prompt and $3/20$ or $2/20$ for the forceful prompt. In the separate ungrounded
contrarian protocol, endpoint success is zero in all four cells. The later $80$-run archive with
saved majority strings has a mean normalized-string change rate of $90.8\%$ after two warmup states.
String changes need not be semantic changes, and the contrarian protocol also changes prompts and
removes the faulty majority; it is not an isolated grounding ablation.

The F1 archive contains $180$ runs per dataset: $15$ questions, three delay labels, two forcing
levels, two update temperatures, one repetition, and $20$ states. The written plan instead specified
five delays, three forcing levels, $30$ questions, three repetitions, $24$ states and a verifier-off
control. Those missing conditions were not executed in the archived sweep. The analysis also
substituted a two-delay Wilcoxon contrast for the planned ordered trend test. We therefore treat F1
as \emph{exploratory}, not as execution of its confirmatory decision rule.

There is an additional indexing defect. For an update producing state $t$, the current state is
$t-1$; the archived F1 code uses the current state for label $r=0$ and state $\max(0,t-r)$ otherwise.
Labels $r=0$ and $r=1$ therefore both implement lag zero, and $r=6$ implements lag five. The other
factual harnesses similarly map labels $1,4$ to lags $0,3$, and labels $1,6$ to lags $0,5$.
The corrected code records and uses theoretical indexing by default; these archived data remain
unchanged. Furthermore, the nominal temperature-zero F1 condition used temperature $0.7$ for initial
answers, so it was not deterministic from initialization.

We recompute amplitude as the standard deviation of the final $10$ mean-score states and pair
contrasts by question. With four faulty peers, the label-$6$ minus label-$0$ mean differences and
unadjusted one-sided Wilcoxon $p$-values are:
\begin{center}
\begin{tabular}{lccc}
\hline
Dataset & Update temperature & Mean amplitude difference & $p$\\
\hline
PsiloQA & $0$ & $-0.0044$ & $0.704$\\
PsiloQA & $0.7$ & $+0.0023$ & $0.433$\\
TruthfulQA & $0$ & $+0.0317$ & $0.124$\\
TruthfulQA & $0.7$ & $+0.0350$ & $0.094$\\
\hline
\end{tabular}
\end{center}
Each contrast has $15$ question pairs. These tests do not reject at $0.05$, even before correction
for multiple comparisons; that is not evidence of equivalence or stability. Mean amplitude increases
in three of the four contrasts. Without faulty peers, median amplitudes are often zero, but some
trajectories still vary (maximum tail standard deviation $0.478$). Variability at lag zero shows that
delay is not necessary for variability in this protocol; it does not identify a unique cause.
The archives omit per-agent answer text and verifier responses, preventing independent NLI rescoring.
Non-negativity alone does not imply delay stability (Remark~\ref{rem:absorb}).

\subsubsection{Expanded factual study with complete response logs}\label{sec:expanded}

To address the small historical samples and missing response text, we ran a separate study on
September 13--15, 2026 using the server model tag \texttt{qwen3.8:latest} through native Ollama.
The frozen manifest records its artifact digest; the server tag is an operational identifier,
not independent verification of a model-family or parameter-count claim. This study contains
$200$ new PsiloQA questions and $200$ new TruthfulQA questions, excluding previously used questions.
Selection used a deterministic hash ordering and source review before the new trajectories,
without screening on new model responses or an NLI difficulty threshold. PsiloQA selection was
from an exported pool of $4000$ candidates, with at most one question per source article;
TruthfulQA selection used the available unseen pool and an $80$-character reference limit.
These restrictions do not yield representative samples of all factual questions. The locally
frozen protocol followed earlier pilots; it was not a public preregistration, and the earlier
NLI-based power estimate does not transfer to this changed outcome measure.

Three free agents interact with four fixed erroneous peers. For each question, three initial
answers are generated once and shared between the two conditions, with actual verification delays $\delta=0,5$.
Each trajectory has $48$ states, indexed $0,\ldots,47$; an update producing state $t$ supplies the
verifier with state $\max(0,t-1-\delta)$. Here $\delta=0$ denotes immediate verification; communication has no added lag.
Agent updates are synchronous, and the temperature is $0.7$
for initialization and updates, with a temperature-zero verifier. Each request has a recorded seed;
condition order is fixed by a hash. The verifier is another invocation of the same model, using
up to $600$ characters of frozen evidence. TruthfulQA evidence includes the reference answer:
this is delayed reference feedback, not independent evidence retrieval. The primary grid holds
the verifier on and the number of erroneous peers fixed; verifier-off and additional forcing
conditions were not part of this expanded grid.

The output contract permits a short answer of at most $80$ characters or an explicit abstention,
with context $4096$, generation budget $256$ tokens and thinking disabled. This differs from the
historical free-text verifier note, so the new study is not an isolated intervention on indexing.
All $800$ trajectories and $226800$ requests completed; request-level parsing yielded $200065$
valid answers, $26733$ abstentions and two invalid answers, with no transport errors or
length-limit terminations. Valid formatting does not ensure a complete or true statement.
The request count includes verifier calls and differs from the number of stored agent states.
Full logs reproduce prompts, seeds, synchronous states and delay indices; source-file hashes
match the technical replay. Pilots and historical repetitions are excluded from the $400$ questions.

\paragraph{Versioned scoring.}
The initial scorer uses a question-specific frozen catalogue, assigning M (reference match),
C (reference contradiction), U (unresolved answer), A (abstention) or F (invalid format).
Only M/C receive binary values $0/1$. Of the unique pairs initially labelled U, all $3604$ were
read by the assisting language model in a packet hiding delay, trajectory position and frequency.
The same assistant had already seen aggregate results, so this is neither independent human
adjudication nor fully blind review. A source-conservative first pass, v1, assigned
$1915$ M, $662$ C and $1027$ U. A second pass, v2, assessed all remaining $1027$ pairs by meaning,
consulting external sources for selected disputed facts; it added $734$ M and $113$ C, leaving
$180$ U. Identical question--answer pairs retain identical labels in every occurrence.

Version v2 accepts recognizable spelling variants, some incomplete endings with unambiguous main
content, and explicitly qualified descriptions of folklore. These post-hoc interpretive choices
are not the official TruthfulQA evaluation, and the assistant's judgments are not certified ground
truth. Original M/C labels, abstentions and format failures were not relabelled. Both versions,
their reasons and sources are preserved; decisions were frozen before their respective effect
recalculations. Together the reviews classify $50149$ of the $51670$ initially unresolved stored
agent states ($97.1\%$), leaving $1521$ U. This percentage measures scoring coverage, not accuracy.

\paragraph{Binary-amplitude contrast.}
For one question, let $y_{i,t,\delta}=0$ for label M and $1$ for label C; this binary
outcome is distinct from the real-valued belief error $e_{i,t}$ in the model.
It is undefined for A, U and F. For a fully scored or hypothetically completed trajectory let
\[
 A_\delta=\operatorname{SD}_{t=24,\ldots,47}
       \left(\frac{1}{3}\sum_{i=1}^{3} y_{i,t,\delta}\right),
 \qquad \Delta=A_5-A_0,
\]
using population standard deviation. We write $\operatorname{SD}_{t\in T}(v_t)
=[|T|^{-1}\sum_{t\in T}(v_t-\bar v_T)^2]^{1/2}$, where
$\bar v_T=|T|^{-1}\sum_{t\in T}v_t$. For each question we compute the minimum and maximum
$A_\delta$ over binary completions of unscored cells, then the outer interval
$[A_{5,\min}-A_{0,\max}, A_{5,\max}-A_{0,\min}]$. Table~\ref{tab:expanded-bounds}
averages these endpoints over all $200$ questions per dataset. For absent answers these
completions are hypothetical: an abstention has no hidden factual truth value. Independently
completing residual U occurrences can also be more permissive than enforcing one truth label
across all occurrences of the same pair. Thus these are conservative outer bounds, not sharp
semantic identification regions, confidence intervals or significance tests. They include neither
sampling uncertainty nor possible adjudicator error. The finite optimization can be reproduced
without enumerating all cell assignments: at each tail time, the possible erroneous-agent counts
are consecutive integers from the known C count to that count plus the unscored count.
A dynamic program starts at $(S,Q)=(0,0)$ and, for each possible count $c$ at the next time,
adds $(S+c,Q+c^2)$. It therefore retains exactly the reachable sums of counts and squared counts.
For $N=24$ tail times and three agents, the variance is $Q/(9N)-S^2/(9N^2)$; taking the
square roots of its minimum and maximum over reachable pairs gives the completion extrema.
The implementation keeps only the smallest and largest $Q$ for each $S$: future count choices
do not depend on $Q$, and the final variance is increasing in $Q$ at fixed $S$.

\begin{table}[t]
\centering
\small
\begin{tabular}{lccc}
\hline
Dataset & Catalogue only & Semantic v1 & Semantic v2\\
\hline
PsiloQA & $[-0.31336,\,0.30800]$ & $[-0.21964,\,0.22103]$ & $[-0.21110,\,0.21280]$\\
TruthfulQA & $[-0.41625,\,0.42362]$ & $[-0.18397,\,0.17875]$ & $[-0.14170,\,0.13446]$\\
\hline
\end{tabular}
\caption{\textbf{Expanded factual study: conservative completion bounds for mean $\Delta$.}
All $200$ questions per dataset are retained. Both signs remain possible under each scoring
version. These intervals are not confidence intervals and do not establish absence of an effect.}
\label{tab:expanded-bounds}
\end{table}

In the analysis tail, PsiloQA has $7781$ A and $292$ U cells, and TruthfulQA has $3510$ A,
$417$ U and two F cells, each out of $28800$ cells. Abstentions account for $11291/12002=94.1\%$
of cells without a binary score, not necessarily the same fraction of interval width. Fully
scored tails in both conditions occur for only three PsiloQA and $15$ TruthfulQA questions;
restricting inference to those questions would select on model behavior. The direction of the
original binary-amplitude contrast remains unidentified after the larger collection and review.

\paragraph{Separate analysis of observed outcomes.}
To describe behavior without assigning factual truth to an abstention, we subsequently defined
$p_{k,t,\delta}$ as the fraction of the three agents receiving label $k\in\{\mathrm M,\mathrm C,
\mathrm A,\mathrm U,\mathrm F\}$. For each question we calculate its tail mean and population
standard deviation, retaining every label and all positions in the denominator. In particular,
$\operatorname{SD}_t(p_{\mathrm C,t,\delta})$ measures fluctuations in \emph{labelled erroneous
responses}; it is a different estimand from $A_\delta$ with undefined truth labels. Constant abstention
can have frequency one and standard deviation zero.

For the mean paired difference of each outcome statistic, $20000$ bootstrap resamples select
whole, previously frozen question clusters with replacement, preserving both conditions and all
questions within each selected cluster. Each resample draws the original number of clusters
with equal probabilities; repeated clusters contribute repeated copies of their questions.
The resampled statistic remains a mean over included questions, whose count can vary when
cluster sizes differ. The reported interval is the empirical $2.5$th to $97.5$th percentile.
There are $200$ PsiloQA source-article clusters and $151$ TruthfulQA question/family clusters,
the latter with maximum size $11$; resampling individual agent states would create
pseudoreplication. We use the pairs cluster-bootstrap principle \cite{CameronMiller} and report
pointwise percentile intervals. This is an exploratory analysis specified after reviewing earlier
results, without multiplicity adjustment or a confirmatory hypothesis test. The intervals condition
on the frozen labels and observed generation trajectories, assume independence between clusters,
and do not account for adjudicator error or variation over new model-generation seeds.

\begin{table}[t]
\centering
\small
\begin{tabular}{llrrr}
\hline
Dataset & Outcome frequency & $\delta=0$ (\%) & $\delta=5$ (\%) & Difference [95\% interval], pp\\
\hline
PsiloQA & M & $59.750$ & $61.215$ & $+1.465\;[-0.597,\,3.660]$\\
 & C & $11.840$ & $11.132$ & $-0.708\;[-2.389,\,0.792]$\\
 & A & $27.361$ & $26.674$ & $-0.687\;[-2.076,\,0.569]$\\
 & U & $1.049$ & $0.979$ & $-0.069\;[-0.271,\,0.076]$\\
 & F & $0.000$ & $0.000$ & $0.000\;[0.000,\,0.000]$\\
TruthfulQA & M & $78.090$ & $79.062$ & $+0.972\;[-0.375,\,2.162]$\\
 & C & $7.750$ & $7.812$ & $+0.062\;[-0.685,\,0.845]$\\
 & A & $12.799$ & $11.576$ & $-1.222\;[-1.910,\,-0.550]$\\
 & U & $1.347$ & $1.549$ & $+0.201\;[-0.618,\,1.110]$\\
 & F & $0.014$ & $0.000$ & $-0.014\;[-0.035,\,0.000]$\\
\hline
\end{tabular}
\caption{\textbf{Expanded factual study: post-hoc outcome frequencies under v2.}
Intervals concern paired changes in percentage points (pp), not separate condition means.
All positions, including abstentions, remain in the denominator. A zero bootstrap interval for
an unobserved event does not establish zero population probability.}
\label{tab:expanded-outcomes}
\end{table}

TruthfulQA abstention frequency is lower at lag five by $1.22$ percentage points, with a pointwise
$95\%$ interval $[-1.91,-0.55]$; this result is identical under v1 and v2
(Table~\ref{tab:expanded-outcomes}). Fewer abstentions do not by themselves establish more correct
answers. Indeed, the TruthfulQA M-frequency point difference changes from $-0.326$ percentage points
under v1 to $+0.972$ under v2, and both intervals include zero. For the standard deviation of
the labelled-error fraction, v2 gives changes $+0.00282$ ($95\%$ interval
$[-0.00771,0.01341]$) on PsiloQA and $+0.00211$ ($[-0.00485,0.00915]$) on TruthfulQA;
v1 likewise gives intervals spanning zero. These observations do not establish equivalence,
stability or the proposed feedback mechanism. Full five-outcome means, standard deviations,
paired results and both scoring versions are retained in the repository analysis files
(Appendix~\ref{app:repro}).

\subsection{Externally controlled signed-error variability across models (RQ3)}

We study a designed controller with LLM agents in its feedback loop, using eight author-curated
numeric questions with fixed reference values and positive normalization scales (in the same units
as each answer), recorded in \path{paper/remote/debate_v3.py}. The harness computes a suggestion
$V_{i,t}=b_{i,t}-\alpha(b_{i,t-\ell}-b^\star)$ outside the LLM, rounds it to one decimal place,
and supplies it alongside peer estimates. The prompt instructs agents to follow the suggestion and
peers and to use no outside knowledge. Thus the update law is imposed by the experimental protocol;
it is not independently identified from ordinary factual verification.

An indexing audit changes the mapping between the stored delay labels and the model. In the archived
harness, when producing state $t$, the current state is $t-1$ but the stale state is
$\max(0,t-r)$, where $r$ is the saved label. Hence exact following would implement
\[
 e_{t+1}=e_t-\alpha e_{t-\ell},\qquad \ell=r-1,
\]
with the initial state repeated before time zero. The saved labels $r=1,6$ therefore correspond to
lags $\ell=0,5$. For $\ell=0$, stability requires $0<\alpha<2$; for $\ell=5$, the ceiling is
$2\sin(\pi/22)\approx0.285$. The contrast at $\alpha=0.5$ still compares a stable and an unstable
ideal controller. But $\alpha=1.5,r=1$ is also stable in the ideal controller, with damped alternating
error; the claim that only the small-gain/small-delay cell is stable was incorrect.
The corrected harness uses the theoretical indexing by default and records the convention. No new
LLM runs under that corrected convention are reported here.

The numeric parser used in these archives extracts the last number-like substring; scientific
notation and leading-dot decimals can be misread. The updated harness uses a strict finite-number
parser. Raw responses were not archived, so the frequency of historical parsing errors cannot be
established. Scores and trajectories were stored to three decimal places: recomputed amplitudes
agree within the resulting $0.001$ tolerance, while two movement flags lie at a rounding-ambiguous
threshold. The analyses below retain the originally stored flags and amplitudes.

The stored experiment has $22$ states per run and three repeated runs per question and cell. Its
``amplitude'' is the standard deviation of $e_t=(\bar b_t-b^\star)/\mathrm{scale}$, where
$\bar b_t$ is the mean of the three free-agent estimates, over states
$3,\ldots,21$; it measures finite-window variability, including transients. The overshoot statistic
below requires both $\min_t e_t<-0.02$ and $\max_t e_t>0.02$ over the stored trajectory.
The separately stored \texttt{zc} field instead counts crossings of the trajectory's own centred
mean and is not used as a truth-overshoot statistic. Neither measure establishes persistent
oscillation, a nonlinear limit cycle, or a bifurcation.

Using all runs and averaging within question, the Qwen3.6-35B amplitude is $0.0549$ at $r=1$ and
$0.2648$ at $r=6$, a ratio of about $4.82$. All eight question-level differences are positive, giving
one-sided Wilcoxon $p=0.00390625$. The same all-run direction and $p$ occur for four other models;
these five survive a Bonferroni adjustment across the six tested models ($6p=0.0234375$).
Mistral-Nemo gives $p=0.0742$. These are exploratory all-run comparisons, not the
movement-conditioned analysis specified in the original script.

\begin{table}[t]
\centering
\caption{\textbf{All-run overshoot under the imposed controller.} Stored delay label $r$ corresponds
to effective lag $\ell=r-1$. Percentages use all $24$ trajectories per model and cell at
$\alpha=0.5$, with the two-sided truth-crossing threshold defined in the text.}\label{tab:models}
\begin{tabular}{llcc}
\hline
Model & Developer & $r=6$ ($\ell=5$) & $r=1$ ($\ell=0$)\\
\hline
Qwen3.6-35B & Qwen & $96\%$ & $0\%$\\
Qwen3-14B & Qwen & $100\%$ & $0\%$\\
Mistral-7B & Mistral & $100\%$ & $0\%$\\
Phi-4 & Microsoft & $96\%$ & $4\%$\\
Gemma-4-12B & Google & $100\%$ & $0\%$\\
Mistral-Nemo-12B & Mistral & $58\%$ & $0\%$\\
\hline
\end{tabular}
\end{table}

The specified movement filter retains runs with range greater than $0.15$ over states $3,\ldots,21$.
Applying it separately within each cell leaves $6$ paired questions for Qwen3.6
($p=0.015625$), $7$ for Gemma ($p=0.0078125$), $5$ for Phi-4 ($p=0.03125$), and $7$ for Nemo
($p=0.0546875$). Qwen3-14B and Mistral-7B have no retained short-lag runs, so that paired comparison
is undefined for them. The five-model all-run conclusion therefore cannot be presented as five
successful tests of the movement-conditioned plan. The filter itself selects on an outcome of the
intervention, so the conditional and unconditional comparisons answer different questions.

These results demonstrate that agents can transmit a delayed correction signal supplied by external
code, with model-dependent deviations from exact following. They do not establish that natural
verification has this update law, that deployed agents are necessarily more stable, or that the
finite observed trajectories approach a bounded limit cycle. The stored metadata also says three
faulty peers, whereas the available driver uses the function default of one; the executed count
cannot be resolved from the aggregate trajectories alone. The fields named ``seed'' are repetition
indices in this driver, which does not pass a random seed to the backend.

\begin{figure}[t]
  \centering
  \includegraphics[width=0.92\linewidth]{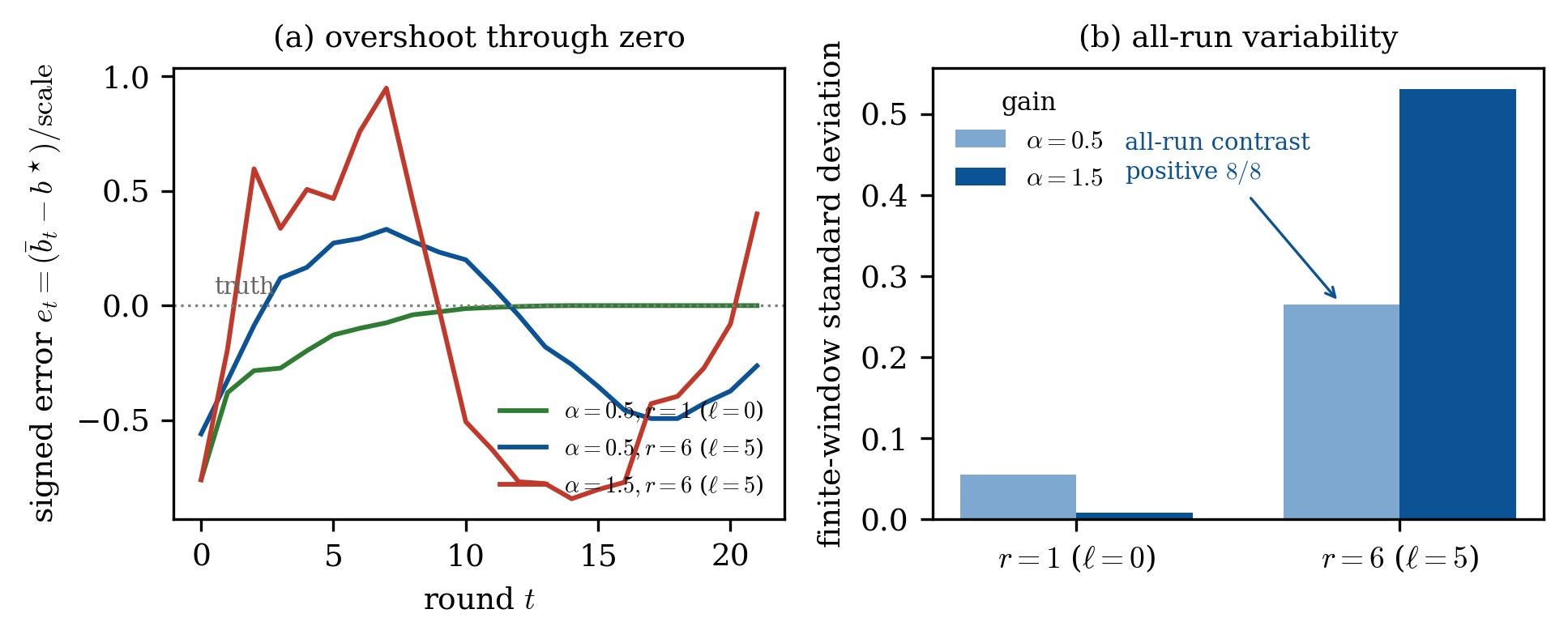}
  \caption{\textbf{Qwen3.6-35B under an externally computed controller.}
  (a)~Selected signed-error trajectories from the saved runs.
  (b)~All-run mean finite-window standard deviation for each gain and stored label $r$;
  the actual lag is $\ell=r-1$. At $\alpha=0.5$, all eight question-level means increase from
  $r=1$ to $r=6$. This is the exploratory all-run comparison; the movement-conditioned result is
  reported separately. Neither panel classifies asymptotic stability from finite observations.}
  \label{fig:realexp}
\end{figure}

\section{Discussion}

In the postulated recurrence, verification is a delayed control action with a finite gain margin.
Corollary~\ref{cor:dose-explicit} says a verifier can be \emph{too}
aggressive or \emph{too} slow; Corollary~\ref{cor:equal} gives the exact combined-gain ceiling when
communication and verification latencies coincide. These results give a model-specific reason to
check feedback dynamics when changing verifier strength or latency. Placement affects both stability
and tracking in the stated model. Our greedy guarantee concerns coherence reduction,
while the stability comparison for nested hard-corrector sets follows from the slowest grounded mode.
This complements online cascade \emph{detectors} \cite{CASPIAN,AgentForesight,GUARDIAN} with a
\emph{controller}-side stability guarantee that detection alone does not provide.

\begin{remark}[non-negative error and absorption]\label{rem:absorb}
The empirical contrast in Section~\ref{sec:emp} does not follow from the sign of the error measure
alone. Consider the illustrative projected recurrence
\[
  p_{t+1}=\max\!\big(0,\,p_t-\alpha\,\sigma(p_{t-\delta})+g\big),
\]
with $\alpha>0$, integer $\delta\ge0$, non-negative initial history, and continuous $\sigma$ satisfying
$\sigma(0)=0$ and $\sigma(p)>0$ for $p>0$. If $g=0$, the sequence for $t\ge0$ is non-increasing and converges to
zero. Indeed, a positive limit $\ell$ would leave an asymptotically positive decrement
$\alpha\sigma(\ell)$, contradicting convergence. Once zero is reached it is absorbing, including
while the delayed history is still positive.

If $g>0$, however, an all-zero history produces $p_{t+1}=g$, and non-negative periodic orbits are
possible. For example, with $\alpha=1$, $\delta=1$, $g=1$ and
$\sigma(p)=\max(-2,\min(p,2))$, the history $(p_{-1},p_0)=(0,0)$ generates
$0,0,1,2,2,1,0,0,\ldots$, a period-six orbit. With $\sigma=\tanh$, the parameters
$\alpha=1.5$, $\delta=1$, $g=0.2$ also produce recurrent non-negative oscillations in the
finite numerical check; exact periodicity is not proved for that example.
Thus projection removes signed overshoot but does not establish absence of oscillation under forcing.
The factual-QA observations should be interpreted as protocol-specific evidence; the projected
recurrence is not an identified model of those experiments. Likewise, loss of linear stability in
the signed model alone does not prove existence of a bounded nonlinear limit cycle.
\end{remark}

\section{Limitations}

Several limitations bound the scope. The analysis is exact for the postulated linear recurrence.
With constant forcing, its stable equilibrium is generally biased away from truth. Interpreting the
recurrence as a local model of a nonlinear LLM loop requires an independently identified equilibrium
and Jacobian; the synthetic onset comparison does not supply that identification. Unlike the dose limit, the placement rule is so far validated only on the coherence surrogate, not with LLM agents. Its approximation guarantee does not cover static mean-square error or the stochastic variance of the delayed loop.
We model constant additive fault forcing, hard pins and a separate delayed restoring term;
the placement surrogate uses finite soft pins. Byzantine faults and
edge-level (message-filtering) correctors are future work. For general unequal delays,
Theorem~\ref{thm:twodelay} gives candidate boundary loci, including singular cases; it does not provide
a closed-form classification of every stable cell or a universal worst delay allocation.
The eigen-decoupling assumes a \emph{symmetric} (reciprocal) interaction graph. Directed grounded
Laplacians can be non-normal, but directedness alone does not imply non-normality. For example,
$2I-P$, where $P$ is a three-cycle permutation matrix, is nonsymmetric and normal: it describes a
directed cycle with an additional unit truth-pinning edge at each free node. It has complex
eigenvalues, so the real-mode formulas do not directly apply even in this normal case.
Non-normal cases can additionally exhibit transient amplification. Directed extensions require
separate stability and transient analyses; the symmetric thresholds are not guaranteed conservative.

The controlled numeric-estimation study is limited by eight questions, three repetitions per cell,
a short observation window, the imposed correction law, the legacy indexing and metadata discrepancies,
and the mismatch between movement-conditioned and all-run analyses. Its finite-window standard
deviation includes transients; a stronger validation would identify dynamics independently and test
persistence at longer horizons. Neither grounded factual-QA
study quantitatively identifies the proposed mechanism. The historical AR diagnostic used the
stored delay labels as actual lags and classified a high constant final error as observed instability.
The previously quoted precision and recall were classification metrics, not coefficient-recovery
accuracy. We withdraw that interpretation. At effective lag zero, current and delayed regressors
are identical, so their separate coefficients are unidentifiable even with unlimited samples.
The lag-aware diagnostic now reports only the combined coefficient there; short, quantized,
often constant trajectories also limit inference at positive lag. The absorption argument applies
only to the unforced projected model (Remark~\ref{rem:absorb}).

The expanded factual study supplies complete logs and corrected indexing, but tests one server
model artifact, two delays, one faulty-peer count and a single trajectory per question and condition.
Its format and evidence differ from the historical protocol, and its primary grid has no verifier-off
control. Source review does not certify every reference, and post-hoc same-assistant semantic review
is not independent human validation. The $80$-character limit can leave unfinished statements even
without a token-limit termination. Repeated questions from related TruthfulQA families motivate
cluster resampling, but the chosen clustering does not guarantee independence across families.
The expanded sample size is not a demonstrated power calculation for the revised metric.
Stronger claims require a prospectively specified outcome that addresses abstentions, an explicit
effect size of interest, score validation, replication on another model and longer horizons; the
existing data should not be reclassified or selectively extended until a desired sign appears.

Finally, heterogeneous correction gains do not automatically preserve the modal reduction.
For a diagonal gain matrix $K$, independent grounded-Laplacian modes require a common eigenbasis
(equivalently, $[K,\Lg]=0$ for these symmetric matrices). This fails in general. Common stochastic
delays with scalar gains may preserve the spatial modes, but require a time-varying stability analysis
rather than a fixed characteristic polynomial. Coupling correction to an online change detector
likewise remains outside the present results.

\section{Conclusion}
We modeled the multi-agent LLM verifier loop as a delayed consensus over a graph with grounded
corrector nodes, and derived exact thresholds showing that verification carries a stability
budget in this model. Excessive correction gain can cause loss of linear stability.
When communication and verification delays coincide, the ceiling for the dimensionless combined
dose $\eta(\mu+\kappa)$ is the inverse golden ratio at delay two. This does not make equal delays universally worst. For a fixed corrector budget, greedy placement guarantees at least
$1-1/e$ of the optimal coherence reduction; this does not imply optimal static mean-square error.

The synthetic loop brackets onset near the linear prediction at the sweep resolution. The LLM
experiment demonstrates transmission of an externally specified correction signal. Correcting the
history indexing maps its stored labels $1,6$ to effective lags $0,5$; the exploratory all-run
comparison supports larger variability at the longer lag in five of six models. This does not
establish the same law for natural factual verification or persistent nonlinear oscillation.
The historical factual-QA contrasts are inconclusive and have small selected samples, aliased
delay labels and departures from the planned controls. The separate expanded study addresses
sample count, delay indexing and missing response logs, but its $400$ questions still leave both
signs of the binary-amplitude contrast possible under conservative completion bounds. Abstentions
dominate the remaining unscored cells. A post-hoc outcome analysis finds lower abstention frequency
at lag five on TruthfulQA, identically under both scoring versions; intervals for the change in
labelled-error variability span zero. This does not establish greater accuracy, absence of an
amplitude effect or general factual stability.

Within the assumed recurrence, dose is limited by stability,
while a placement objective must specify the error or noise model it represents.

\section*{Acknowledgments}
The author thanks Ilya Makarov for valuable feedback on the manuscript.

\appendix
\renewcommand{\thesection}{Appendix~\Alph{section}}
\section{Proof of Proposition~\ref{prop:boundary} (oscillatory boundary)}\label{app:proofs}
Fix $0<a<1$. At $\beta=0$ the roots of $P(z)=z^\delta(z-a)+\beta$ are $0$ and $a$, strictly
inside the unit disk. For $z=e^{i\theta}$, $0\le\theta\le\pi$, put
$\phi_a(\theta)=\arg(e^{i\theta}-a)$, taking the continuous argument from $0$ to $\pi$. Then
\[
  \phi_a'(\theta)=\frac{1-a\cos\theta}{1-2a\cos\theta+a^2}>0,
  \qquad \beta=|e^{i\theta}-a|=\sqrt{1+a^2-2a\cos\theta}.
\]
Positive $\beta$ requires $\delta\theta+\phi_a(\theta)=(2j+1)\pi$. The left side is strictly
increasing; the smallest admissible angle $\theta_\star$ is its unique solution at $j=0$.
Since $\partial\phi_a/\partial a=\sin\theta/(1-2a\cos\theta+a^2)>0$ for $0<\theta<\pi$,
the limiting arguments at $a=0$ and $a=1$ give $\theta<\phi_a(\theta)<(\pi+\theta)/2$. Hence
$\pi/(2\delta+1)<\theta_\star<\pi/(\delta+1)$. The modulus above increases strictly with
$\theta$, so this angle and its conjugate give the first crossing; any other admissible angle,
including $\theta=\pi$ when allowed, has larger dose. Separating real and imaginary parts of
$P(e^{i\theta_\star})=0$ gives \eqref{eq:boundary}.

Every unit-circle root is simple, since
$P'(z)=z^{\delta-1}((\delta+1)z-\delta a)$ cannot vanish there. Implicit differentiation at such
a root yields
\[
 \frac{\mathrm d\log|z|}{\mathrm d\beta}
 =\Re\frac{z-a}{\beta((\delta+1)z-\delta a)}
 =\frac{\delta+1+\delta a^2-(2\delta+1)a\cos\theta}
 {\beta\,| (\delta+1)z-\delta a |^2}>0.
\]
Indeed, the numerator is at least $(1-a)(\delta+1-\delta a)>0$. All crossings are outward,
so stability cannot return at larger $\beta$. This proves the stated necessary-and-sufficient
criterion for every integer $\delta\ge1$, without a finite-delay enumeration.

\section{Proof of Lemma~\ref{lem:cheb} (monotonicity of the dose)}\label{app:cheb}
Write $a_\delta(c)=U_\delta/U_{\delta-1}$. The Chebyshev recurrence $U_\delta=2cU_{\delta-1}-U_{\delta-2}$
gives a continued fraction and the derivative recursion
\[
  a_\delta=2c-\frac{1}{a_{\delta-1}},
  \qquad
  a_\delta'=2+\frac{a_{\delta-1}'}{a_{\delta-1}^2} .
\]
For $\delta\ge2$, the relevant branch has $c>\cos(\pi/\delta)$, to the right of the largest
zero of $U_{\delta-1}$ and of all preceding polynomials. Thus the denominators in this induction
are nonzero. Since $a_1'=2$, it follows that $a_\delta'>0$ on this branch.
All zeros of $U_{\delta-1}$ are smaller than $c$, and its leading coefficient is positive; the
product representation therefore gives $U_{\delta-1}>0$ and $U_{\delta-1}'>0$. Hence
\[
  \frac{\mathrm d\beta_c}{\mathrm da}=-\frac{U_{\delta-1}'}{U_{\delta-1}^2\,a_\delta'}<0 ,
\]
which proves strict decrease in the instantaneous coefficient $a$ for $\delta\ge2$.
At $\delta=1$, $U_0=1$ and $\beta_c=1$, so this derivative is zero.

For delay monotonicity, fix $a\in(0,1)$ and use the phase equation from~\ref{app:proofs}.
If $\theta_\delta$ solves $\delta\theta+\phi_a(\theta)=\pi$, then
$(\delta+1)\theta_\delta+\phi_a(\theta_\delta)=\pi+\theta_\delta>\pi$.
Strict increase of the phase implies $\theta_{\delta+1}<\theta_\delta$.
Since $\sqrt{1+a^2-2a\cos\theta}$ increases strictly with $\theta$, it follows that
$\beta_c(a,\delta+1)<\beta_c(a,\delta)$ for every integer $\delta\ge1$.

At $a=0$, direct factorization gives $\beta_c=1$ for all delays. As $a\uparrow1$, the phase equation
gives $\theta_\star\to\pi/(2\delta+1)$ and
$\beta_c(1,\delta)=2\sin(\pi/(4\delta+2))$. At $\delta=2$, eliminating $c$ gives
$\beta_c^2+a\beta_c=1$, hence the positive solution in the lemma.

\section{Two-delay locus and equal-delay ceiling}\label{app:twodelay}
For Theorem~\ref{thm:twodelay}, substitute $z=e^{i\theta}$ into $P(z)=0$ and divide by $z^h$.
The real and imaginary parts are \eqref{eq:twodelay}. Viewed as equations for $(p,q)$, their
coefficient matrix has determinant $\sin((\delta-d)\theta)$. If it is nonzero, Cramer's rule gives
the displayed parametrization. If it vanishes, the rank-one system must instead be solved for
consistency; it may contribute a line rather than a parametrized point. Direct substitution at
$z=1$ gives $p+q=0$, and at $z=-1$ gives
\[
  p\,(-1)^d+q\,(-1)^\delta=2 .
\]
No other unit-circle roots are possible. Root counts are locally constant away from this locus,
by continuity of roots and the fixed leading coefficient of $P$. Consequently each connected open
cell can be classified by counting roots at one representative point. At $d=0$, the regular curve
agrees with Proposition~\ref{prop:boundary} on its branch $0<1-p<1$. The equality $d=\delta$ is a
separate integer-delay case, not a continuous limit through the singular parametrization.

For Corollary~\ref{cor:equal}, write $b=p+q>0$, so the polynomial becomes
$z^{\tau+1}-z^\tau+b$. At $b=0$ the root at $1$ is simple and has derivative $\mathrm dz/\mathrm db=-1$;
all other roots are at zero. Thus sufficiently small positive $b$ is stable. For a non-real
unit-circle root with $0<\theta<\pi$, the phase equation is
$\tau\theta+(\pi+\theta)/2=(2j+1)\pi$ and $b=2\sin(\theta/2)$.
The first admissible angle is $\theta_\star=\pi/(2\tau+1)$, yielding the claimed threshold.
At any unit-circle crossing with $b>0$, the radial derivative from~\ref{app:proofs} with $a=1$
has numerator $(2\tau+1)(1-\cos\theta)>0$; the roots are simple and cross outward. Stability
therefore does not return at higher gain. Taking the largest $\mu$ gives the network condition.

\section{Proof of Theorem~\ref{thm:place} (greedy placement)}\label{app:place}
Throughout, write $M(R)=G+\kappa I+w\sum_{i\in R}u_iu_i^\top\succ0$ and $H(R)=\tr M(R)^{-1}$,
with the fixed baseline $G$ of Section~\ref{sec:place}.

\emph{Monotonicity.} For $i\notin R$, pinning is a positive-semidefinite update,
\[
  M(R\cup\{i\})=M(R)+w\,u_iu_i^\top\succeq M(R)
  \;\Longrightarrow\;
  M(R\cup\{i\})^{-1}\preceq M(R)^{-1},
\]
so $H(R\cup\{i\})\le H(R)$. Thus $H$ is non-increasing and $\rho(R)=H(\varnothing)-H(R)$ is non-decreasing
with $\rho(\varnothing)=0$.

\emph{Marginal gain.} By the Sherman--Morrison identity,
\[
  \bigl(M+w\,u_iu_i^\top\bigr)^{-1}=M^{-1}-\frac{w\,M^{-1}u_iu_i^\top M^{-1}}{1+w\,u_i^\top M^{-1}u_i},
\]
whose trace is $\tr M^{-1}-w\|M^{-1}u_i\|^2/(1+w\,u_i^\top M^{-1}u_i)$. Subtracting recovers
\eqref{eq:marg},
\[
  \Delta_i(R)=\frac{w\,\|M(R)^{-1}u_i\|^2}{1+w\,u_i^\top M(R)^{-1}u_i}\ \ge\ 0 .
\]

\emph{Submodularity.} We must show the marginal is non-increasing in the pinned set,
\[
  \Delta_i(R)\ge\Delta_i(S)\qquad\text{for } R\subseteq S,\ i\notin S .
\]
Here the structure of $G$ is essential. Because $M(R)=G+\kappa I+W_R$ has nonpositive off-diagonal
entries (from $-A$) and is positive definite, it is a symmetric nonsingular \emph{M-matrix}, so
\[
  M(R)^{-1}\ge0\quad\text{entrywise}.
\]
This inverse sign can be checked directly: choose $s>\lambda_{\max}(M)$ and write
$B=I-M/s$. Then $B$ is entrywise non-negative, its eigenvalues lie in $[0,1)$, and
$M^{-1}=s^{-1}\sum_{j=0}^{\infty}B^j$ is entrywise non-negative. The same argument applies
after any non-negative diagonal update.
For a direct diminishing-returns proof, let $Q(x)=(G+\kappa I+\diag x)^{-1}$ and
$f(x)=\tr Q(x)$ for $x\ge0$. Matrix differentiation gives
\[
 \partial_i Q=-Qu_iu_i^\top Q,\qquad
 \partial_i f=-(Q^2)_{ii},\qquad
 \partial_j\partial_i f=2Q_{ij}(Q^2)_{ij}\ge0.
\]
The last sign follows from entrywise non-negativity and symmetry of $Q$.
Thus $-\partial_i f$ is non-increasing as any other pin weight grows. Integrating it from $0$ to
$w$ along coordinate $i$ proves $\Delta_i(R)\ge\Delta_i(S)$ for $R\subseteq S$, $i\notin S$.
The Loewner order alone would not prove this sign for an arbitrary positive-definite matrix.
The numerical sweep in \texttt{placement\_demo.py --sweep} finds $0/4000$ violations for
Laplacian-plus-diagonal matrices and $115/4000$ for arbitrary positive-definite baselines.

\emph{Greedy bound.} Write $S_j$ for the greedy set after $j$ steps, $S_0=\varnothing$ in this
argument, and let $R^\star$ be an optimal size-$k$ soft-pin set. Monotonicity and submodularity give
\[
 \rho(R^\star)-\rho(S_j)
 \le\sum_{i\in R^\star\setminus S_j}\Delta_i(S_j)
 \le k\bigl(\rho(S_{j+1})-\rho(S_j)\bigr).
\]
Thus the remaining gap after a step is at most $1-1/k$ times the previous gap. Iterating from
$\rho(\varnothing)=0$ gives the finite-$k$ bound in \eqref{eq:greedybound};
$(1-1/k)^k\le e^{-1}$ gives its last inequality \cite{Nemhauser}. Substituting
$\rho(R)=H(\varnothing)-H(R)$ yields the stated bound on $H$.

\section{Reproducibility}\label{app:repro}
Repository: \url{https://github.com/YehudaItkin/delayed-verification-llm}.
For offline reproduction, use Python 3.11 or later and run from the repository root:
\begin{verbatim}
python -m pip install -r requirements-reproduce.txt
python reproduce.py
\end{verbatim}
This checks \path{RESEARCH_MANIFEST.json}, restores gzip-compressed logs and recomputes the
expanded analyses without querying a model. For restoration alone, run
\texttt{python restore\_data.py}. Paths below refer to restored files. Missing historical
raw generations still prevent independent reconstruction of some old scores and parser outputs.

Targeted Lean~4 certificates are under \path{paper/lean/}:
\path{PlacementCheck.lean} covers the placement counterexample and reduction-bound algebra;
\path{DelayCheck.lean} the delay-allocation and singular-locus examples;
\path{ControllerCheck.lean} archived indexing;
\path{RecheckCheck.lean} gossip-only, averaging and identifiability witnesses; and
\path{RereadCheck.lean} the normal directed matrix and fixed-peer expansion.
They do not certify the complete stability or placement theorems or the empirical model.

Expanded-study paths are relative to \path{experiments/factual_expanded_v4/}:
\path{code_v4_3/PROTOCOL_RU.md} specifies collection;
\path{qwen38/v4_3_snapshot/results/primary/manifest.json} freezes questions, evidence,
conditions, model digest and code hashes. The adjacent \path{requests.jsonl} and
\path{runs.jsonl} preserve all requests, responses and states;
\path{primary_v4_3_verification.json} records the technical replay.
Semantic decisions and their application are in \path{semantic_review_20260915/}
(v1) and \path{semantic_review_v2_20260915/} (v2).
\path{outcome_analysis_20260915/} contains the post-hoc protocol, code, tests, per-question
results, hashes and $20000$ bootstrap replicates per dataset and version, with seeds
$20260915$ (PsiloQA) and $20260916$ (TruthfulQA).
These records verify calculations conditional on frozen labels, not their independent factual validity.

\end{document}